\def\bx{{\bf x}}
\def\by{{\bf y}}
\def\bg{{\bf g}}
\def\bA{{\bf A}}
\def\bI{{\bf I}}
\def\bn{{\bf n}}
\def\Exp{\mathbb{E}}
\def\bG{{\bf G}}
\newtheorem{theorem}{Theorem}
\newtheorem{Lemma}{Lemma}
\newtheorem{Proposition}{Proposition}
\begin{document}

\title{Performance Analysis and Location Optimization for Massive MIMO Systems with Circularly Distributed Antennas}

\author{Ang Yang,
Yindi Jing,~\IEEEmembership{Member,~IEEE,} Chengwen Xing,~\IEEEmembership{Member,~IEEE,} \\
Zesong
Fei,~\IEEEmembership{Member,~IEEE,} Jingming Kuang
\thanks{A.~Yang, C.~Xing, Z.~Fei, and J.~Kuang are with the School of Information and Electronics,
Beijing Institute of Technology, Beijing, China.
(Email: taylorkingyang@163.com, chengwenxing@ieee.org,
feizesong@bit.edu.cn, JMKuang@bit.edu.cn).}

\thanks{Y.~Jing is with the Department of Electrical and Computer Engineering,
University of Alberta, Edmonton, AB, T6G 2V4 Canada (e-mail:
yindi@ualberta.ca).}

}

\maketitle
\vspace{-4mm}
\begin{abstract}
In this paper, we analyze the achievable rate of the uplink of a
single-cell multi-user distributed massive
multiple-input-multiple-output (MIMO) system. The multiple users are
equipped with single antenna and the base station (BS) is equipped
with a large number of distributed antennas. We derive an analytical
expression for the asymptotic ergodic achievable rate of the system
under zero-forcing (ZF) detector. In particular, we consider
circular antenna array, where the distributed BS antennas are
located evenly on a circle, and derive an analytical expression and
closed-form tight bounds for the achievable rate of an arbitrarily
located user. Subsequently, closed-form bounds on the average
achievable rate per user are obtained under the assumption that the
users are uniformly located in the cell. Based on the bounds, we can
understand the behavior of the system rate with respect to different
parameters and find the optimal location of the circular BS antenna
array that maximizes the average rate. Numerical results are
provided to assess our analytical results and examine the impact of
the number and the location of the BS antennas, the transmit power,
and the path-loss exponent on system performance. It is shown that
circularly distributed massive MIMO system largely outperforms
centralized massive MIMO system.

\end{abstract}


\begin{keywords}
Massive MIMO, distributed MIMO, achievable rate analysis, antenna location optimization.
\end{keywords}



\IEEEpeerreviewmaketitle

\section{Introduction}
With the demands of the wireless data services nowadays, high
spectrum efficiency or data rate is undoubtedly an important feature
of future wireless systems
\cite{spectral_efficiency_1,spectral_efficiency_2}. In order to
improve the data rate of wireless systems, various innovative ideas
have been proposed and investigated. Among the most successful ones
in recently years is the multiple-input-multiple-output (MIMO)
concept \cite{MIMO_1,MIMO_2}.

Recently, distributed MIMO systems, or distributed antenna systems,
was proposed to further improve the data rate, in which multiple
transmit or receive antennas are distributively located to reduce
the physical transmission distance between the transmitter and the
receiver
\cite{Outage_performance,Distributed_Antenna,Honglinhu_2007,TWU_WCNC,Huiling_Zhu_Resource,DAS_tsinghua_1,huilin_zhu_1,Antenna_placement,huilin_zhu_2,DAS_tsinghua_2}.
The distributive antennas are assumed to be connected to the central
unit via high-bandwidth and low-delay backhaul such as optical fiber
channels. It has been proved that distributed MIMO outperforms
centralized MIMO in outage probability and achievable rate. In
\cite{Outage_performance}, for a single-cell single-user distribute
MIMO system with arbitrary antenna topology, the authors analyzed
the outage performance as well as the diversity and multiplexing
gains. For a single-cell distributed MIMO systems with multiple
uniformly distributed users, upper bounds on the ergodic capacity of
one user and approximate expressions of sum capacity of the cell
were derived in \cite{DAS_tsinghua_1}. In \cite{Antenna_placement},
both single-cell and two-cell distributed MIMO systems with multiple
uniformly distributed users per cell are considered. The cells are
assumed to be circular and the distributed base station (BS)
antennas have circular layout. The locations of the distributive
antennas were optimized to maximize lower bounds on the expected
signal to noise ratio (SNR) and signal to leakage ratio. The
resource (including power, subcarrier, and bit) allocation problems
in single-cell multi-user distributed antenna systems were
investigated in \cite{huilin_zhu_2}. In \cite{TWU_WCNC}, different
radio resource management schemes were compared for multi-cell
multi-user distributed antenna systems. In \cite{DAS_tsinghua_2},
for multi-cell networks with multiple remote antennas and one
multi-antenna user in each cell,  the input covariances for all
users were jointly optimized to maximize the achievable ergodic sum
rate.

In the above literature of distributed MIMO, single user or multiple
users with orthogonal channels are assumed, so there is no
inter-user interference. However in current and future wireless
systems, it is expected to have multiple users sharing the same
time-frequency resource. In such systems, one user will suffer from
the interferences of other users in the cell, which can largely
degrade the system rate. To conquer the intra-cell user interference
problem, the concept of massive MIMO, where the BS is equipped with
a very large number of antennas usually of hundreds or higher, was
proposed and attracted considerable attention recently
\cite{spectral_efficiency_2,Marzetta,Massive_1,Massive_2,Massive_3,Massive_4,Massive_5,Massive_6,Massive_7,Massive_8,Larsson_uplink,Massive_9,Massive_10}.
With a large number of antennas at the BS, according to the law of very long vectors, transmission channels for different users are orthogonal to each other. User-interference diminishes and very high data rate can be achieved with low complexity signal processing. 
The ergodic achievable rates of the single-cell
multi-user massive MIMO uplink with linear detectors, i.e. maximum ratio
combining (MRC), zero-forcing (ZF), minimum-mean-square-error
(MMSE), have been derived in \cite{Larsson_uplink}. The achievable rates of both the uplink and downlink of multi-cell multi-user massive MIMO systems with linear precoders and detectors were analyzed in \cite{Massive_5}.

As the combination of the two promising concepts, massive MIMO and
distributed MIMO, distributed massive MIMO systems are of great
potential in fulfilling the increasing demands of next generation
communication systems
\cite{D_massive_1_shijin_massive,D_massive_2_C_Zhong_massive,D_massive_3,D_massive_4,D_massive_5}.
The authors in \cite{D_massive_1_shijin_massive} focused on an
uplink massive MIMO system consisting of multiple users and one BS
with several large scale distributed antenna sets. In this system,
the deterministic equivalence of the ergodic sum rate was derived
and an iterative waterfilling algorithm was proposed for finding the
capacity-achieving input covariance matrices.
\cite{D_massive_2_C_Zhong_massive} analyzed the sum rates of
distributed MIMO systems, in which $L$ multiple-antenna radio ports
form a virtual transmitter and jointly transmit data to a
centralized multiple-antenna BS. The capacity and spatial degrees of
freedom of a large distributed MIMO system were investigated in
\cite{D_massive_3}, where wireless users with single transmit and
receive antenna cooperate in clusters to form distributed transmit
and receive antenna arrays. In \cite{D_massive_4}, a simplistic
matched filtering scheme and a subspace projection filtering scheme
were investigated in a fixed size single-cell network, where the BS
antennas are assumed uniformly and randomly located in the cell
serving single-antenna users. In \cite{D_massive_5}, for networks
where one large-scale distributed BS with a grid antenna layout
serving single-antenna users, the energy efficiency maximization
problem was formulated under per-antenna transmit power and per-user
rate constraints.  low complexity channel-gain-based antenna
selection method and interference-based user clustering method were
proposed to improve the system energy efficiency.

In this paper, we consider the uplink of a single-cell multi-user
distributed massive MIMO system in which the BS equipped with a
large number of distributed antennas receives information from
multiple single-antenna users. The asymptotic system
achievable rate under linear ZF detector is analyzed for arbitrary but known antenna locations and for circular antenna layout.
Based on the analysis, the location of the distributed
antennas for circular antenna layout is optimized. Compared to
\cite{D_massive_1_shijin_massive,D_massive_2_C_Zhong_massive}, the topology of the distributed BS antennas used in our work is different, as the BS in
\cite{D_massive_1_shijin_massive} is the combination of several
centralized BS and the BS in \cite{D_massive_2_C_Zhong_massive} is
centralized. Compared to \cite{D_massive_3}, the transmitters in this work are
non-cooperative single-antenna users, while the users in
\cite{D_massive_3} cooperate with each other and form a virtual
transmitter with distributed antenna arrays. Thus our system model
and derivations of achievable rate are different from
\cite{D_massive_1_shijin_massive,D_massive_2_C_Zhong_massive,D_massive_3}.
Compared to \cite{D_massive_4,D_massive_5}, we focus on the
derivations of achievable rate and the location optimization of the distributive BS antennas, while \cite{D_massive_4} faces the problem
of interference control through the use of second-order channel
statistics and \cite{D_massive_5} works on the energy efficiency
maximization problem. The major contributions of this paper are
summarized as follows.

\begin{itemize}

\item We provide new results for independent but non-identically distributed  (i.n.i.d) random vectors with very high dimension (see Lemma \ref{Lemma_law_large_number}). Based on the results, an analytical expression for the asymptotic achievable rate of multi-user distributed massive MIMO systems is derived for arbitrary but known user location and antenna deployment (see Proposition \ref{capacity total}).

\item We consider a practical circular antenna layout, where antennas of the BS are located evenly in a circle. The asymptotic achievable rates for an arbitrarily located user and two closed-form tight bounds
are derived (see Theorems \ref{capacity circle theorem label} and \ref{capacity circle bounds theorem label}).
\item Furthermore, for the circular antenna layout, tight closed-form bounds on the average rate of the cell with uniform user location are obtained (see Theorem \ref{circle average capacity}). These results can be used to predict the system performance and understand its behavior with respect to the number of antennas, the location of antenna, the cell size, and the transmit power.

\item Based on the acquired tight bounds, we
derive the optimal radius of the distributed antennas for the
maximum average achievable rate (see Lemma \ref{circle average opt theorem}), which guides the fundamental and practical problem of antenna placement  for distributed MIMO systems.

\item Finally, numerical results are provided to assess our
analysis. It is shown that multi-user distributed massive MIMO is largely
superior to centralized massive MIMO in achievable uplink rate.
\end{itemize}

The remaining of the paper is organized as follows. The system model
and asymptotic achievable rate analysis of a general distributed massive MIMO system are presented in
Section II. Asymptotic achievable rate analysis of the circularly
distributed massive MIMO is present in Section III. Location
optimization of the circularly distributed antennas is provided in
Section IV. Simulation results are presented in Section V. Conclusions are drawn in Section VI. Involved proofs are included in the appendices.

Notation: Boldface lowercase letters denote vectors, while boldface
uppercase letters denote matrices. We use ${(\cdot)}^{\rm{T}}$,
$(\cdot)^{*}$ and $(\cdot)^{\rm{H}}$ to denote the transpose,
conjugate and conjugate transpose of a matrix or a vector,
respectively. For a matrix ${\bf{Z}}$, ${\rm{Tr}}({\bf{Z}})$ is its
trace. The symbol ${\bf{I}}_{M}$ denotes the $M \times M$ identity
matrix, while ${\bf{0}}_{M,N}$ denotes the $M \times N$ matrix whose
entries are zeros. The symbol ${\mathbb{E}}$ denotes the statistical
expectation operation. The symbol $\left\| {\cdot} \right\|_F$
denotes Frobenius norm of a matrix or a vector. The function
$\log_2(\cdot)$ is the base-2 logarithm and $\ln(\cdot)$ is the
natural logarithm.

\section{System model and Asymptotic Achievable Rate Analysis}

\begin{figure}[!t]
\begin{center}
\includegraphics[width=4in]{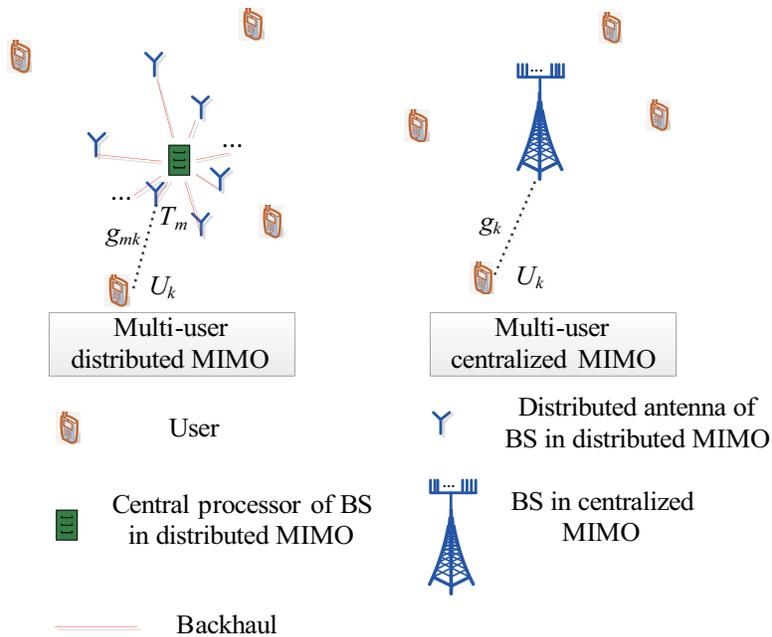}
\caption{System models of multi-user distributed MIMO (left side)
and multi-user centralized MIMO (right side).} \label{fig-1}
\end{center}
\end{figure}

\subsection{Multi-User Distributed Massive MIMO System Model}

We consider a single-cell multi-user distributed massive MIMO
system. As shown in the left side of Figure \ref{fig-1}, in this
system, there is one BS equipped with $M$ antennas which are
spatially distributed \cite{D_massive_4,D_massive_5}. The number of
antennas $M$ is assumed to be large, e.g., a few hundreds. This is
different to the multi-user centralized MIMO system
\cite{Marzetta,Massive_1,Massive_2,Massive_3,Massive_4,Massive_5,Massive_6,Massive_7,Massive_8,Larsson_uplink,Massive_9,Massive_10},
where the BS antennas are centralized and spatially co-located
(shown in the right side of Figure \ref{fig-1}). Compared with
centralized MIMO systems, distributed MIMO systems provide
macro-diversity and have enhanced network coverage and capacity, due
to their open and flexible infrastructure
\cite{Huiling_Zhu_Resource,Distributed_Antenna,D_massive_1_shijin_massive}.
We assume that the distributed BS antennas are connected with high
capacity backhaul and have ideal cooperation with each other. There
are $K$ users, each equipped with single antenna. We assume that
$M\gg K$.

Denote the channel coefficient between the $m$th antenna of the BS
and the $k$th user as $g_{mk}$. We consider both the path-loss and
small-scale fading as follows \cite{D_massive_5}
\begin{align}\label{channel matrix}
{g_{mk}} = {h_{mk}}\sqrt {{\beta _{mk}}}, \quad m = 1,2, \ldots ,M,\ k =
1,2, \ldots ,K,
\end{align}where ${h_{mk}}$ is the small-scale fading coefficient, which is model as a random variables with zero-mean and unit-variance. Without loss of generality, Rayleigh fading is adopted in the simulation, where $h_{mk}$ follows circularly symmetric complex  Gaussian (CSCG) distribution, i.e., $h_{mk}\sim\mathcal{CN}(0,1)$. $h_{mk}$'s are assumed to be mutually independent.
$\beta _{mk}$ models the path-loss. We assume that\begin{align}\label{large scale}
 {{\beta _{mk}}}  = d_{mk}^{-v},
\end{align}
where $d_{mk}$ is the distance between $m$th antenna of the BS and the $k$th user and $v$ is the path-loss exponent with typical values ranging from $2$ to $6$, i.e., $2\le \nu \le 6$. Let
\[\bg_k\triangleq\left[\begin{array}{lll} g_{1k} & \cdots & g_{Mk}\end{array}\right]^{\rm T},\] which is the $M\times 1$ channel vector between $k$th user and all $M$ BS antennas. Let
\[\bG\triangleq\left[\begin{array}{lll} \bg_{1} & \cdots & \bg_{K}\end{array}\right],\] which is the channel matrix between  all $K$ users and all $M$ BS antennas. Perfect channel state information (CSI) is assumed at the BS, that is, the BS knows $\bG$ precisely. 

The uplink communication is studied, where the $K$ users transmit their
data in the same time-frequency resource to the BS. Let $\bx$ be the $K \times 1$ signal vector containing the user data, where its $k$-th entry $x_k$ is the information symbol of the $k$th user. $\bx$ is normalized as $\Exp\{\left\|{\bf{x}}\right\|_F^2\}=1$.
Let $P$ be the average transmit power of each user. This implies that in this work, we assume that all users have the same transmit power. But the derived results can be directly employed to non-equal power case. The $M \times 1$ vector of the received
signals at the BS is
\begin{align}
\label{received signals} {\bf{y}} = \sqrt {P} {\bf{G}}{\bf{x}} +
{\bf{n,}}
\end{align}
where ${\bf{n}}$ is the noise vector, whose entries are assumed to be independent and identically distributed (i.i.d.) CSCG random variables with zero-mean and unit-variance, that is, $\bn\sim\mathcal{CN}({\bf 0},\bI_M)$.


ZF linear detector is used at the receiver, which has low-complexity
and achieves comparable sum-rate performance to other more
complicated designs such as the minimum-mean-square-error (MMSE)
detector in massive MIMO systems \cite{Larsson_uplink,Massive_5}. ZF
separates data streams from different users by multiplying the
received signal vector $\by$ with $\bA\triangleq{\left(
{{{\bf{G}}^H}{\bf{G}}} \right)^{ - 1}}{\bf{G}}^H$. From
(\ref{received signals}), we have
\begin{align}\label{detector 3}
{\bf{r}} = \bA\by={\left( {{{\bf{G}}^H}{\bf{G}}} \right)^{ - 1}}{\bf{G}}^H\by =\sqrt P {\bf{x}} + {\left( {{{\bf{G}}^H}{\bf{G}}} \right)^{ -1}}{\bf{G}}^H{\bf{n}}.
\end{align}
 Focusing on the $k$th element of ${\bf{r}}$, we have, from (\ref{detector 3}),
\begin{align}\label{detector 4}
{r_k}=\sqrt P x_k + {\bf{a}}_k^H{\bf{n}},
\end{align}where ${\bf{a}}_k$ is the $k$th column of the matrix ${\bf{A}}$. Since $\bn\sim\mathcal{CN}({\bf 0}_{M \times 1},\bI_M)$, the equivalent noise  ${\bf{a}}_k^H{\bf{n}}$ is a CSCG random variable with zero mean and variance ${\left\| {{{\bf{a}}_k}} \right\|_F^2}$, i.e., ${\bf{a}}_k^H{\bf{n}}\sim\mathcal{CN}(0,{\left\| {{{\bf{a}}_k}} \right\|_F^2})$. There is no interference term in (\ref{detector 4}) due to the ZF matrix structure. It is noteworthy that ZF detector is chosen here for the simplicity of the presentation and our work can be straightforwardly extended to MMSE linear detector.

\subsection{Asymptotic Achievable Rate Analysis}

In this subsection, we analyze the asymptotic achievable rate of the multi-user distributed massive MIMO uplink when $M\rightarrow \infty$, for a general BS antenna deployment and user location.

To assist the analysis, we first prove the following results for very long random
vectors.

\begin{Lemma}\label{Lemma_law_large_number}
Let ${\bf{p}} \buildrel \Delta \over = {\left[
{\begin{array}{*{20}{c}} {{p_1}}&{{p_2}}& \ldots &{{p_M}}
\end{array}} \right]^T}$ and ${\bf{q}} \buildrel \Delta \over = {\left[
{\begin{array}{*{20}{c}} {{q_1}}&{{q_2}}& \ldots &{{q_M}}
\end{array}} \right]^T}$ be independent $M \times 1$ vectors whose elements are i.n.i.d.~zero-mean random variables. Assume that $\Exp\left\{ {{{\left| {{p_i}} \right|}^2}} \right\} = \sigma _{p,i}^2$, $\Exp\left\{ {{{\left| {{p_i}}
\right|}^4}} \right\} < \infty$ and $\Exp\left\{ {{{\left| {{q_i}}
\right|}^2}} \right\} = \sigma _{q,i}^2$, $\Exp\left\{ {{{\left|
{{q_i}} \right|}^4}} \right\} < \infty$ for $i=1,2,\ldots,n$. We have, when $M\rightarrow \infty$,
\begin{align}\label{large number 1}
\frac{1}{M}{{\bf{p}}^H}{\bf{p}}\mathop  \to \limits^{a.s.}
{\frac{1}{M}\sum\limits_{i = 1}^M {\sigma _{p,i}^2} },
\end{align}
and
\begin{align}\label{large number 2}
\frac{1}{M}{{\bf{p}}^H}{\bf{q}}\mathop  \to \limits^{a.s.} 0,
\end{align}where $\mathop  \to \limits^{a.s.}$ denotes the almost sure convergence.
\label{lemma-1}
\end{Lemma}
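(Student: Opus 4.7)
The plan is to reduce both statements to Kolmogorov's strong law of large numbers for independent but non-identically distributed summands. Recall the classical form: if $\{X_i\}_{i\geq 1}$ are independent with $\sum_{i=1}^{\infty} \text{Var}(X_i)/i^2 < \infty$, then $\frac{1}{n}\sum_{i=1}^n (X_i - \Exp X_i) \to 0$ almost surely. The role of the finite fourth-moment hypotheses in the lemma is precisely to control the variances appearing in Kolmogorov's criterion (with the implicit understanding that they are uniformly bounded over $i$, which is the case in the paper's channel model since path-loss coefficients are bounded).

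For (\ref{large number 1}), I would write $\frac{1}{M}{\bf p}^H {\bf p} = \frac{1}{M}\sum_{i=1}^M |p_i|^2$ and apply Kolmogorov's SLLN to $X_i = |p_i|^2$. These are independent with $\Exp X_i = \sigma_{p,i}^2$ and $\text{Var}(X_i) = \Exp|p_i|^4 - \sigma_{p,i}^4$, which is finite (and uniformly bounded) by hypothesis. Thus $\sum_i \text{Var}(X_i)/i^2 < \infty$, and Kolmogorov's criterion gives $\frac{1}{M}\sum_{i=1}^M (|p_i|^2 - \sigma_{p,i}^2) \to 0$ almost surely, i.e., (\ref{large number 1}).

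For (\ref{large number 2}), let $Z_i \triangleq p_i^* q_i$. The independence of the entries within each vector together with the independence of ${\bf p}$ and ${\bf q}$ implies that the $Z_i$'s are mutually independent. Moreover $\Exp Z_i = \Exp p_i^* \cdot \Exp q_i = 0$ since both factors have zero mean, and $\text{Var}(Z_i) = \Exp|p_i|^2 \Exp|q_i|^2 = \sigma_{p,i}^2 \sigma_{q,i}^2$. By Cauchy--Schwarz, $\sigma_{p,i}^2 \sigma_{q,i}^2 \le \sqrt{\Exp|p_i|^4 \, \Exp|q_i|^4}$, which is finite (and uniformly bounded). Kolmogorov's criterion applies again and yields $\frac{1}{M}\sum_{i=1}^M p_i^* q_i \to 0$ almost surely, which is (\ref{large number 2}).

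I do not expect any serious obstacle, as both parts reduce to a direct invocation of the same classical SLLN. The only delicate point is the tacit requirement that the fourth moments be uniformly bounded in $i$ (rather than merely finite for each $i$); without this, Kolmogorov's sum $\sum_i \text{Var}(X_i)/i^2$ need not converge. I would either state this uniform boundedness explicitly as part of the hypothesis, or appeal to the fact that it is automatically satisfied in every subsequent application of the lemma in the paper, where $p_i, q_i$ arise as bounded-variance channel coefficients. With that caveat, the two almost-sure convergences follow in a few lines each.
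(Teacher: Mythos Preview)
Your proposal is correct and, in a sense, more complete than the paper's own argument. The paper's proof proceeds differently: it sets $a_i=|p_i|^2$, bounds the variance of the sample mean by $C/M$ (tacitly assuming a uniform fourth-moment bound $C$, just as you flag), and then invokes Chebyshev's inequality to conclude that the difference is small with high probability as $M\to\infty$; the second claim is handled by the same device applied to $p_i^*q_i$. Strictly speaking, Chebyshev only yields convergence in probability, not the almost-sure convergence stated in the lemma, so the paper's argument stops one step short of what is claimed. Your appeal to Kolmogorov's SLLN closes exactly that gap: under the same uniform fourth-moment bound, $\sum_i \mathrm{Var}(X_i)/i^2<\infty$ holds and the a.s.\ conclusion follows directly. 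The trade-off is that the paper's route is entirely self-contained (Chebyshev plus a variance bound), whereas yours imports a named theorem; for the purposes of the downstream applications---where only the in-probability statement is ever really used---either suffices, but your version matches the lemma's stated mode of convergence.
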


\begin{proof}
See Appendix \ref{app-A}.
\end{proof}

The results in Lemma \ref{lemma-1} can be seen as generalizations of the
results in Eqs.~(4) and (5) in \cite{Larsson_uplink}. More specifically, the results in \cite{Larsson_uplink} are for very
long random vectors with i.i.d.~elements, while the results in Lemma
\ref{lemma-1} can be applied to very long random vectors with i.n.i.d.~elements.

With the results in Lemma \ref{lemma-1}, we can obtain the following
expression for the asymptotic achievable rate of distributed massive
MIMO systems.
\begin{Proposition}\label{capacity total}
When $M\rightarrow \infty$, the average achievable rate of the $k$th
user in the multi-user distributed MIMO system has the following
asymptotic behavior:
\begin{align}\label{capacity theorem}
{R_k} \mathop  \to \limits^{a.s.}  {\log _2}\left( {1 +
P\sum\limits_{m = 1}^M {{\beta _{mk}}} } \right).
\end{align}
\end{Proposition}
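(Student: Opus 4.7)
The plan is to reduce the achievable rate expression to a function of a single diagonal entry of $(\mathbf{G}^H\mathbf{G})^{-1}$, and then apply Lemma \ref{Lemma_law_large_number} entry-wise to understand the asymptotic behavior of $\mathbf{G}^H\mathbf{G}$ itself.

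First I would observe from (\ref{detector 4}) that, since $\mathbf{a}_k^H\mathbf{n}\sim\mathcal{CN}(0,\|\mathbf{a}_k\|_F^2)$ and there is no inter-user interference, the post-detection SNR for user $k$ is $P/\|\mathbf{a}_k\|_F^2$. A direct calculation gives $\|\mathbf{a}_k\|_F^2 = [\mathbf{A}\mathbf{A}^H]_{kk} = [(\mathbf{G}^H\mathbf{G})^{-1}]_{kk}$, so
\begin{equation*}
R_k = \log_2\!\left(1 + \frac{P}{[(\mathbf{G}^H\mathbf{G})^{-1}]_{kk}}\right).
\end{equation*}
The task therefore becomes identifying the almost-sure limit of $[(\mathbf{G}^H\mathbf{G})^{-1}]_{kk}$.

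Next I would look at $\frac{1}{M}\mathbf{G}^H\mathbf{G}$ entry-by-entry. For the $(k,k)$ diagonal entry, set $\mathbf{p}=\mathbf{g}_k$ in Lemma \ref{Lemma_law_large_number}; then $\sigma_{p,m}^2 = \beta_{mk}$ (and $\mathbb{E}|g_{mk}|^4 = \mathbb{E}|h_{mk}|^4 \beta_{mk}^2$ is finite under the usual boundedness of the path-loss coefficients), yielding
\begin{equation*}
\frac{1}{M}\mathbf{g}_k^H\mathbf{g}_k - \frac{1}{M}\sum_{m=1}^{M}\beta_{mk} \xrightarrow{a.s.} 0.
\end{equation*}
For each off-diagonal entry $\mathbf{g}_i^H\mathbf{g}_j$ with $i\neq j$, independence of $\mathbf{g}_i$ and $\mathbf{g}_j$ (the small-scale fading coefficients are mutually independent across users) allows the second statement of Lemma \ref{Lemma_law_large_number} to drive $\frac{1}{M}\mathbf{g}_i^H\mathbf{g}_j \xrightarrow{a.s.} 0$. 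Combining these $K^2$ scalar statements (a fixed number, since $K$ is fixed) gives that $\frac{1}{M}\mathbf{G}^H\mathbf{G}$ approaches, almost surely and entry-wise, the $K\times K$ diagonal matrix with $k$-th diagonal element $\frac{1}{M}\sum_{m=1}^{M}\beta_{mk}$.

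Finally I would invoke continuity of matrix inversion at invertible matrices (legitimate because $K$ is fixed and the limiting diagonal matrix is positive definite): the relation $(\mathbf{G}^H\mathbf{G})^{-1}=\frac{1}{M}\bigl(\tfrac{1}{M}\mathbf{G}^H\mathbf{G}\bigr)^{-1}$ then shows that
\begin{equation*}
[(\mathbf{G}^H\mathbf{G})^{-1}]_{kk} \xrightarrow{a.s.} \frac{1}{\sum_{m=1}^{M}\beta_{mk}},
\end{equation*}
and continuity of $\log_2(1+P/x)$ at positive $x$ delivers (\ref{capacity theorem}). The main obstacle I anticipate is the slight awkwardness that the limit in Lemma \ref{Lemma_law_large_number} is itself $M$-dependent, so the ``convergence'' statements must be read carefully as differences going to zero; once the scaling by $M$ is tracked consistently and the $K\times K$ reduction is in place, the rest is routine.
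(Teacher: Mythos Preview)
Your proposal is correct and follows essentially the same route as the paper: express the rate via $[(\mathbf{G}^H\mathbf{G})^{-1}]_{kk}$, apply Lemma~\ref{Lemma_law_large_number} to the diagonal and off-diagonal entries of $\tfrac{1}{M}\mathbf{G}^H\mathbf{G}$, and pass through matrix inversion by continuity. Your write-up is in fact slightly more careful than the paper's in flagging that $K$ is fixed, that the ``limit'' in Lemma~\ref{Lemma_law_large_number} is $M$-dependent, and that continuity of $\log_2(1+P/x)$ is what closes the argument.
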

\begin{proof}
From (\ref{detector 4}), the ergodic achievable rate of the $k$th user is
\begin{equation}\label{capacity 4}
{R_k} = \Exp\left\{ {{{\log }_2}\left( {1 +
\frac{P}{\left\|{{\bf{a}}_k} \right\|^2}} \right)} \right\}=\Exp\left\{ {{{\log }_2}\left( {1 + \frac{P}{{{{\left[
{{{\left( {{{\bf{G}}^H}{\bf{G}}} \right)}^{ - 1}}} \right]}_{kk}}}}}
\right)} \right\}.
\end{equation}
From Lemma \ref{lemma-1}, when $M\rightarrow \infty$, we have
\begin{align}\label{proof2 2}
\frac{1}{M}{\left\| {{{\bf{g}}_k}} \right\|^2} & \mathop  \to
\limits^{a.s.} \frac{1}{M}\sum\limits_{i = 1}^M {\Exp\left\{ {{{\left|
{{g_{ik}}} \right|}^2}} \right\}}  = \frac{1}{M}\sum\limits_{i =
1}^M {{\beta _{ik}}}, \\
\label{proof2 3} \frac{1}{M}{\bf{g}}_k^H{{\bf{g}}_i} & \mathop  \to
\limits^{a.s.} 0,i \ne k.
\end{align}

Entries of ${\left( {{{\bf{G}}^H}{\bf{G}}} \right)}^{ - 1}$ are
continuous and have finite first order derivatives with respect to
${\bf{g}}_1,{\bf{g}}_2,\ldots,{\bf{g}}_K$ when $\bG$ is
non-singular. Meanwhile, the probability of $\bG$ is singular is
zero. Thus, we have
\def\diag{{\rm diag}}
\begin{eqnarray}
&&{\left( {{{\bf{G}}^H}{\bf{G}}} \right)^{ - 1}}
=\frac{1}{M}{\left( \frac{1}{M}{{{\bf{G}}^H}{\bf{G}}} \right)^{
- 1}}
\nonumber \\
&& \overset{a.s}{\to} \frac{1}{M}\diag\left\{\frac{\sum_{i = 1}^M
\beta_{i1}}{M}, \frac{\sum_{i = 1}^M \beta _{i2}}{M}, \frac{\sum_{i
= 1}^M \beta _{iK}}{M} \right\}^{ - 1} \nonumber\\
&&=\diag\left\{\sum_{i = 1}^M
\beta_{i1}, \sum_{i = 1}^M \beta _{i2}, \sum_{i = 1}^M \beta _{iK}
\right\}^{ - 1} \label{proof2 0 5}
\end{eqnarray}
By substituting (\ref{proof2 0 5}) into (\ref{capacity 4}), the proposition is proved.
\end{proof}

\def\bI{{\bf I}}
The result in Eq. (\ref{capacity theorem}) of Proposition
\ref{capacity total} can be applied to massive MIMO systems with
arbitrary antenna deployment and user location. Proposition
\ref{capacity total} is also applicable to MMSE detector, where
$\bA=\left(\bG^H\bG+\frac{1}{P}\bI_k\right)^{-1} \bG^{H}$. With the
MMSE detector, we can change Eqs.~(\ref{detector 3}), (\ref{detector
4}), and (\ref{capacity 4}) accordingly to include the interference.
But due to the large number of antennas, the interference term will
diminish and the same achievable rate result can be obtained.

The multi-user centralized MIMO system considered in
\cite{Marzetta,Massive_5,Massive_6,Larsson_uplink} can be seen as a
special case of our multi-user distributed MIMO system with all the
distributed antennas located in the same place, i.e.,
$\beta_{mk}=\beta_k$ for the $k$th user. Thus, from (\ref{capacity
theorem}), we can obtain its achievable rate result as
\begin{align}\label{capacity case 1}
{R_{k,{\rm central}}} \mathop  \to \limits^{a.s.}  {\log _2}\left(
{1 + PM \beta_k} \right),
\end{align}which is the same as Eq.~(13) in \cite{Larsson_uplink}.

\section{Asymptotic Analysis for the Achievable Rate of Circularly Distributed Antennas}
Theoretically speaking, antennas in a distributed massive MIMO
system can take arbitrary locations and topology. The optimization
of the antenna locations can be highly challenging, if not
intractable, due to the large number of antennas and design
parameters. On the other hand, arbitrary antenna locations or
optimal topology may have prohibitive backhaul cost and installation
cost. In real applications, it is more practical to consider
manageable antenna topology. In this work, we consider circularly
located BS antennas, where all antennas are on a circle centered at
the cell enter. Circular topology has ideal symmetry and low
dimension (radius and angle). Compared with the line topology, it is
expected to have superior performance due to better symmetry. Compared
with the grid topology, it is expected to have lower implementation
cost and more tractability in analysis. Circular antenna layout has
been considered in the literature
\cite{Antenna_placement,D_massive_circle_1,D_massive_circle_2} and
shown to have good performance.

\begin{figure}[!t]
\begin{center}
\includegraphics[width=5in]{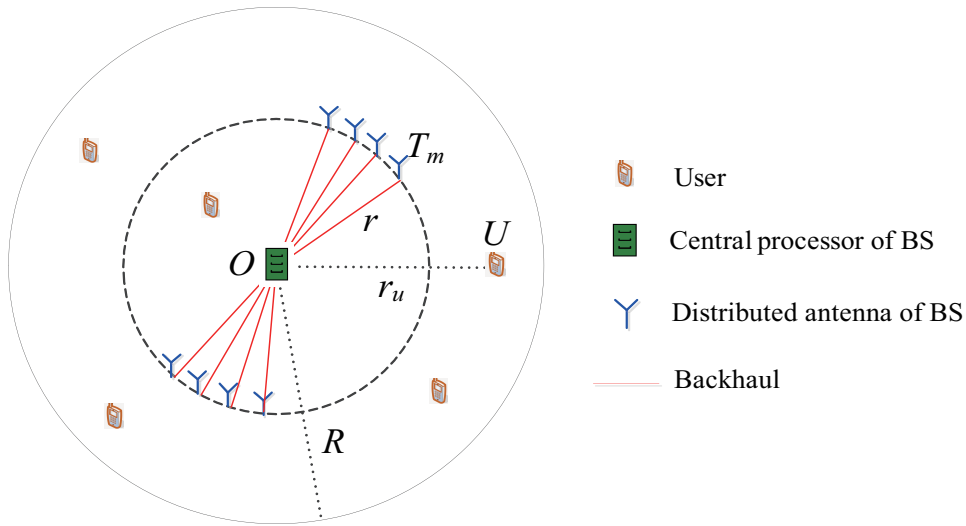}
\caption{Distributed massive MIMO system with circular BS antennas.}
\label{circle-eps}
\end{center}
\end{figure}
In this section, for distributed massive MIMO systems with circular antenna layout, we first specify the system model, then analyze the asymptotic achievable rate for an arbitrary user, and finally derive the average sum-rate per user assuming uniform user location.

The single-cell multi-user distributed massive MIMO system is shown
in Figure \ref{circle-eps}. We consider a circular cell with radius
$R$. The center of the cell is denoted as $O$. Circular cell is
widely used
\cite{Antenna_placement,D_massive_circle_1,D_massive_circle_3} and
has been shown to have similar performance to hexagonal cell. But it
enjoys more tractable analysis. The distributed BS antennas are
located evenly on a smaller circle with radius $r$, whose center is
the same as the center of the cell. We denote the location of the
$m$th BS antenna as $T_m$. Thus, the length of the segment $OT_m$ is
$r$. Notice that since the BS antennas are evenly located on the
circle and the antenna number is assumed to be large, the angle
dimension of the antenna location (for example, the angle of the
segment $OT_m$ and the horizontal axis) has little effect on the
system performance and only the radius of the antenna circle
matters. We denote the location of an arbitrarily user as $U$. Let
$r_u$ be the distance between the user and the cell center, i.e.,
the length of the segment $OU$.

\subsection{Asymptotic Achievable Rate of an Arbitrary User for Circularly Distributed Massive MIMO}

The following results on the asymptotic achievable rate of an arbitrary user at distance $r_u$ for the circularly distributed massive MIMO system are obtained.
\begin{theorem}\label{capacity circle theorem label} For the circularly distributed massive MIMO system with antenna radius $r$, when $M\rightarrow\infty$, the ergodic achievable rate of the user at distance $r_u$ from the cell center has the following asymptotic behavior:
\begin{align}\label{capacity circle theorem}
{R} \mathop  \to \limits^{a.s.}  R_{asy}\triangleq {\log _2}\left( {1 + PM{I_0}}
\right),
\end{align}where
\begin{align}\label{capacity circle theorem 1}
{I_0} \triangleq  {\left| {{r^2} - {r_u^2}} \right|^{ -
\frac{v}{2}}}{P_{\frac{v}{2} - 1}}\left( {\frac{{{r^2} +
{r_u^2}}}{{\left| {{r^2} - {r_u^2}} \right|}}} \right),
\end{align}
with $P_{\cdot}(\cdot)$ the Legendre function \cite{rvq:Table_of_Integrals}.

For $\nu=2,4,6$, closed-form expressions for the achievable rate can obtained as follows:
\begin{align}
{R}\mathop  \to \limits^{a.s.}R_{asy}\triangleq \left\{ {\begin{array}{ll}
{{{\log }_2}\left( {1 + {PM}\frac{1}{{{{\left| {{r^2} - r_u^2} \right|}^{}}}}} \right),}&{\mbox{if $v = 2$},}\\
{{{\log }_2}\left( {1 + PM\frac{{{r^2} + r_u^2}}{{{{\left| {r^2 - r_u^2} \right|}^3}}}} \right),}&{\mbox{if $v = 4$},}\\
{{{\log }_2}\left[ {1 + PM\cdot\frac{3\left(\frac{r^2 + r_u^2}{r^2 - r_u^2}\right)^2-1} {2\left| {r^2 - r_u^2} \right|^3}}
\right],}&{\mbox{if $v = 6$}.}
\end{array}} \right.
\label{cl-form-R}
\end{align}

\end{theorem}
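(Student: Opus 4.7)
The plan is to derive the result in two stages: first reduce the ergodic rate expression to a deterministic sum via Proposition~\ref{capacity total}, then evaluate that sum asymptotically by exploiting the symmetry of the circular antenna layout. By Proposition~\ref{capacity total}, the ergodic rate of the arbitrary user satisfies $R \stackrel{a.s.}{\to} \log_2\bigl(1 + P\sum_{m=1}^M \beta_{mu}\bigr)$, so what remains is to analyze $\sum_{m=1}^M d_{mu}^{-v}$. Using the law of cosines on the triangle formed by $O$, $T_m$, and $U$, and denoting by $\theta_m$ the angle $\angle T_m O U$, I get $d_{mu}^2 = r^2 + r_u^2 - 2 r r_u \cos\theta_m$. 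Since the $M$ antennas are evenly spaced on the circle, the angles $\theta_m$ are uniformly distributed on $[0,2\pi)$.

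Next I would convert the resulting sum into a Riemann integral. Specifically,
\begin{equation*}
\frac{1}{M}\sum_{m=1}^M d_{mu}^{-v} \;=\; \frac{1}{M}\sum_{m=1}^M \bigl(r^2 + r_u^2 - 2 r r_u\cos\theta_m\bigr)^{-v/2} \;\longrightarrow\; \frac{1}{2\pi}\int_0^{2\pi}\bigl(r^2 + r_u^2 - 2 r r_u\cos\theta\bigr)^{-v/2}\,d\theta
\end{equation*}
as $M\to\infty$, since the integrand is bounded and continuous in $\theta$ (assuming $r\ne r_u$, i.e., the user is not on the antenna circle). Therefore $\sum_{m=1}^M \beta_{mu} \approx M\cdot I_0$, where $I_0$ is the above integral (times $1/(2\pi)$); plugging this into the expression from Proposition~\ref{capacity total} yields $R_{asy} = \log_2(1 + PM I_0)$ as claimed.

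The main technical step — and the part I expect to be the principal obstacle — is recognizing this integral as a Legendre function. I would use the Whipple/Laplace integral representation
\begin{equation*}
P_{\nu}(\cosh\alpha) \;=\; \frac{1}{\pi}\int_0^{\pi}\bigl(\cosh\alpha + \sinh\alpha\cos\theta\bigr)^{\nu}\,d\theta,
\end{equation*}
after first reducing the interval to $[0,\pi]$ by the symmetry $\cos(2\pi-\theta)=\cos\theta$ and flipping the sign of $\cos\theta$ by $\theta\mapsto \pi-\theta$. Factoring $|r^2-r_u^2|^{-v/2}$ out of the integrand leaves $(\cosh\alpha + \sinh\alpha\cos\theta)^{-v/2}$ with $\cosh\alpha = (r^2+r_u^2)/|r^2-r_u^2|$ and $\sinh\alpha = 2 r r_u/|r^2-r_u^2|$. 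This gives $I_0 = |r^2-r_u^2|^{-v/2} P_{-v/2}\bigl((r^2+r_u^2)/|r^2-r_u^2|\bigr)$, which then becomes the stated form after applying the Legendre identity $P_{-s} = P_{s-1}$.

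Finally, for the closed-form expressions I would simply substitute the elementary Legendre polynomials $P_0(x)=1$, $P_1(x)=x$, and $P_2(x)=(3x^2-1)/2$ corresponding to $v=2,4,6$ into the general formula, yielding the three explicit rates in \eqref{cl-form-R}. A brief remark on the edge case $r=r_u$ (where $I_0$ diverges, reflecting the singularity of the path-loss model at zero distance) may also be warranted.
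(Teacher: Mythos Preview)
Your proposal is correct and follows essentially the same route as the paper: reduce via Proposition~\ref{capacity total}, pass from the Riemann sum over evenly spaced angles to the integral $\frac{1}{2\pi}\int_0^{2\pi}(r^2+r_u^2-2rr_u\cos\theta)^{-v/2}\,d\theta$, and identify this as $|r^2-r_u^2|^{-v/2}P_{v/2-1}\bigl((r^2+r_u^2)/|r^2-r_u^2|\bigr)$. The only notable differences are that the paper simply cites a table entry for the integral evaluation whereas you supply the Laplace representation and the $P_{-s}=P_{s-1}$ identity explicitly, and the paper additionally bounds the Riemann-sum error by $\frac{2}{M}\bigl[(r-r_u)^{-v}-(r+r_u)^{-v}\bigr]$ to quantify the $O(1/M)$ convergence---a refinement you may wish to add but which is not required for the theorem as stated.
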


\begin{proof}
See Appendix \ref{app-B}.
\end{proof}

In this theorem, to facilitate the presentation, we introduce a new
notation, $R_{asy}$ for the asymptotic ergodic achievable rate of a
user when $M\rightarrow \infty$. Note that ${P_a}\left( b \right) =
F\left( { - a,a + 1;1;\frac{{1 - b}}{2}} \right)$, where
$F(\cdot,\cdot;\cdot;\cdot)$ is the Gauss hypergeometric function
\cite{rvq:Table_of_Integrals}. Many software for scientific
computations such as Matlab have this function. Thus the result in
(\ref{capacity circle theorem}) and (\ref{capacity circle theorem
1}) of Theorem \ref{capacity circle theorem label} can be easily
calculated numerically. However, due to the special function, for
$\nu\ne 2,4,6$, the achievable rate is not in closed-form and little
insight can be obtained on the performance behavior of the
circularly distributed massive MIMO system with respect to the cell
size and antenna location. Thus, in what follows, we derive bounds
on the achievable rate in closed-form.

\begin{theorem} \label{capacity circle bounds theorem label}Define
\begin{align}\label{capacity circle upper theorem}
R^{B,1} &\triangleq {\log _2}\left( {1 + PM\frac{{{{\left( {{r_u^2}
+ {r^2}} \right)}^{\frac{v}{2} - 1}}}}{{{{\left| {{r_u^2} - {r^2}}
\right|}^{v - 1}}}}} \right),\\
 \label{capacity circle upper theorem_1} R^{B,2} &\triangleq {\log _2}\left( {1 + PM
\frac{{{{\left( {r_u^2 + {r^2}} \right)}^{\frac{v}{2} - 1}}}}{\left|
{r_u^2 - {r^2}} \right|^{v - 1}}\cdot \frac{{{2^{3\left(\frac{v}{2}
- 1\right)}}\Gamma^2 \left( {\frac{v}{2} - \frac{1}{2}}
\right)}}{{\pi \Gamma \left( {v - 1} \right)}} } \right).
\end{align}
For the circularly distributed massive MIMO system with antenna radius $r$, the asymptotic achievable rate of a user at distance $r_u$ from the cell center, denoted as $R_{asy}$, can be bounded as follows
\begin{equation}
\left\{\begin{array}{ll}
R^{B,1}\ge R_{asy}\ge R^{B,2}, & \mbox{ if \ $ 2\le \nu \le 4$}, \\
R^{B,1}\le R_{asy}\le R^{B,2}, & \mbox{ if \ $6 \ge \nu \ge 4$}, \\
R_{asy}=R^{B,1}= R^{B,2}, & \mbox{ if \ $ \nu =2 \ or \ 4$}. \\
\end{array}\right.
\label{bound-R}
\end{equation}

\end{theorem}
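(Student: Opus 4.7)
The plan is to reduce both bounds to a single functional inequality for the Legendre function $P_\alpha$ with $\alpha \triangleq v/2-1 \in [0,2]$, by exploiting Laplace's integral representation. From Theorem \ref{capacity circle theorem label} I rewrite $R_{asy} = \log_2\!\bigl(1 + PM\,|r^2-r_u^2|^{-v/2}\,P_\alpha(s)\bigr)$ with $s \triangleq (r^2+r_u^2)/|r^2-r_u^2| \ge 1$, and note that the arguments inside the logarithms of $R^{B,1}$ and $R^{B,2}$ have exactly the same form, with $P_\alpha(s)$ replaced by $s^\alpha$ and by $C_2(\alpha)\,s^\alpha$, respectively, where $C_2(\alpha) \triangleq \frac{2^{3\alpha}\Gamma^2(\alpha+1/2)}{\pi\,\Gamma(2\alpha+1)}$. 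The Gamma duplication formula $\Gamma(2\alpha+1) = 2^{2\alpha}\Gamma(\alpha+1/2)\Gamma(\alpha+1)/\sqrt{\pi}$ simplifies this to $C_2(\alpha) = \frac{2^\alpha\Gamma(\alpha+1/2)}{\sqrt{\pi}\,\Gamma(\alpha+1)}$. Since $x\mapsto\log_2(1+PMx)$ is strictly increasing, the whole theorem reduces to sandwiching $P_\alpha(s)$ between $s^\alpha$ and $C_2(\alpha)\,s^\alpha$, with the correct direction in each regime of $v$.

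The key tool is Laplace's integral representation $P_\alpha(s) = \frac{1}{\pi}\int_0^\pi \bigl(s + \sqrt{s^2-1}\cos\theta\bigr)^\alpha d\theta$, valid for $s\ge 1$ and $\alpha > -1/2$. I introduce the interpolating family $f(a) \triangleq \frac{1}{\pi}\int_0^\pi (s + a\cos\theta)^\alpha d\theta$ for $a \in [0,s]$. Then $f(0) = s^\alpha$ and $f(\sqrt{s^2-1}) = P_\alpha(s)$ by inspection, while $1+\cos\theta = 2\cos^2(\theta/2)$ combined with the Beta integral $\int_0^{\pi/2}\cos^{2\alpha}u\,du = \sqrt{\pi}\,\Gamma(\alpha+1/2)/(2\Gamma(\alpha+1))$ yields $f(s) = C_2(\alpha)\,s^\alpha$. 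Thus the three quantities to be compared are exactly the values of $f$ at $a=0$, $a=\sqrt{s^2-1}$, and $a=s$.

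Differentiating under the integral gives $f'(0) = \frac{\alpha s^{\alpha-1}}{\pi}\int_0^\pi \cos\theta\,d\theta = 0$ and $f''(a) = \frac{\alpha(\alpha-1)}{\pi}\int_0^\pi \cos^2\theta\,(s+a\cos\theta)^{\alpha-2}d\theta$. Hence $f$ is strictly concave in $a$ when $\alpha\in(0,1)$ (i.e., $v\in(2,4)$) and strictly convex when $\alpha\in(1,2)$ (i.e., $v\in(4,6)$); combined with $f'(0)=0$, this forces $f$ to be monotone on $[0,s]$ — decreasing in the concave case and increasing in the convex case. Since $\sqrt{s^2-1}\in[0,s]$, sandwiching $f(\sqrt{s^2-1})$ between $f(0)$ and $f(s)$ yields the two inequality chains on $P_\alpha(s)$, which become the two rate chains after applying the monotone map $\log_2(1+PM|r^2-r_u^2|^{-v/2}\cdot\,)$. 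The boundary cases $\alpha\in\{0,1\}$, i.e., $v\in\{2,4\}$, are immediate: $P_0\equiv 1$, $P_1(s)=s$, and $C_2(0)=C_2(1)=1$, so $f$ is constant or linear in $a$ and all three values agree.

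The main obstacle is the structural step of recognizing both candidate bounds as the two endpoint values $f(0)$ and $f(s)$ of an interpolating integral whose interior value $f(\sqrt{s^2-1})$ is exactly $P_\alpha(s)$; once that idea is in hand, the identification of $f(s)$ via the Beta integral and the Gamma duplication formula, and the sign analysis of $f''$, are routine. A minor technical point to address is that the integrand defining $f''$ can fail to be integrable at $\theta=\pi$ as $a\to s$ when $\alpha<3/2$, but this is irrelevant since concavity/convexity only needs to hold on $[0,s)$ and extends to the closed interval by continuity of $f$.
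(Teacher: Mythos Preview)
Your proof is correct and follows essentially the same route as the paper's: both start from Laplace's integral representation of $P_{v/2-1}$, recognize the two bounds as the values of an interpolating integral at the extreme values of an auxiliary parameter (the paper's $\sqrt{z^2-1}/z\in[0,1]$, your $a\in[0,s]$), and conclude by a monotonicity argument. The only cosmetic difference is in that last step: the paper first symmetrizes the integrand over $[0,\pi/2]$ so that the \emph{first} derivative in the parameter already has the sign of $(v/2-1)\bigl[(1+t\cos\varphi)^{v/2-2}-(1-t\cos\varphi)^{v/2-2}\bigr]$, whereas you skip the symmetrization and instead combine $f'(0)=0$ with the sign $\alpha(\alpha-1)$ of $f''$ --- the two arguments are equivalent, and your use of the duplication formula to simplify $C_2(\alpha)$ is a nice touch the paper does not make explicit.
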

\begin{proof}
See Appendix \ref{app-C}.
\end{proof}

Theorem \ref{capacity circle bounds theorem label} provides both lower and upper bounds on the achievable rate. When $v \ge 4$, $R^{B,1}$ is a lower bound and $R^{B,2}$ is an
upper bound; when $v \le 4$, $R^{B,1}$ is an upper bound and
$R^{B,2}$ is a lower bound. When $\mu=2,4$, $R^{B,1}$ and
$R^{B,2}$ are the same and equal the achievable rate of the user. Moreover, it is evident that the bounds in Theorem \ref{capacity circle bounds theorem
label} are in closed-form. We note that the coefficient ${\frac{{{2^{3\left(\frac{1}{2}v - 1\right)}}\Gamma^2 \left( {\frac{v}{2} -
\frac{1}{2}} \right)}}{{\pi \Gamma \left( {v - 1} \right)}}}$ in (\ref{capacity circle upper theorem_1}) only
depends on $\nu$, the path-loss exponent and can be easily calculated offline.

To justify the tightness of the two closed-form bounds, we
analyze their difference as
follows:
\begin{eqnarray*}
\label{capacity bounds differences} \left| {R^{B,1} - R^{B,2}}
\right| &=& \left|\log_2\frac{1+PM\frac{\left(r_u^2 +r^2
\right)^{\frac{v}{2} - 1}} {\left|r_u^2 - r^2\right|^{v -
1}}}{1+PM\frac{\left(r_u^2 +r^2 \right)^{\frac{v}{2} -
1}}{\left|r_u^2 - r^2\right|^{v - 1}} \cdot
\frac{2^{3\left(\frac{v}{2}
- 1\right)}\Gamma^2 \left(\frac{v}{2} - \frac{1}{2} \right)}{\pi \Gamma \left( {v - 1} \right)}}\right| \\
& \le &\left| {{{\log }_2}{\frac{{{2^{3\left(\frac{v}{2} -
1\right)}}\Gamma^2 \left( {\frac{v}{2} - \frac{1}{2}} \right)}}{{\pi
\Gamma \left( {v - 1} \right)}}}} \right| \mathop  <
\limits^{(a)}0.6,
\end{eqnarray*}
where (a) is obtained by software calculations for $2 \le v \le 6$.
This shows that the two bounds are close to each other with less
than 0.6 bits/s/Hz difference. The difference is negligible for
massive MIMO systems as both bounds $R^{B,1}$ and $R^{B,2}$ increase
in $\log_2 M$. Thus, either $R^{B,1}$ and $R^{B,2}$ can function as
a tight closed-form approximation of the achievable rate when the
number of BS antennas is large. Our simulation results in Section
\ref{sec-simu} also justify the tightness of the bounds.

Since both bounds $R^{B,1}$ and $R^{B,2}$ increase in $\log P$ where $P$ is the transmit power and $\log M$ where $M$ is the number of BS antennas, the user
achievable rate is proved to increase in $\log P$ and $\log M$.

\subsection{Asymptotic Average Achievable Rate of the Cell for Circularly Distributed Massive MIMO}
In the previous subsection, we have analyzed the asymptotic rate of
an arbitrarily located user in the cell. In this subsection, we
derive the asymptotic average rate of a user in the cell, which
indicates the average experience of user service. The users are
assumed to be randomly and uniformly located
\cite{Antenna_placement,user_location}. Thus, the probability
distribution function of a user's distance to the cell center,
denoted as $r_u$, is
\begin{align}\label{circle average 0}
f_{r_u}\left( x \right) = \frac{{2}}{{{R^2}}}x.
\end{align}
The angle of the user's location (to the horizonal axis) is uniform distributed on $[0,2\pi)$. The following theorem on the asymptotic average rate of a user is proved.

\begin{theorem}\label{circle average capacity} Define
\begin{align}\label{circle average 3}
{{\bar R}^{B,1}}&= {\log _2}\left( {PM} \right) + \left(
{\frac{v}{2} - 1} \right)\left( {1 + \frac{{{r^2}}}{{{R^2}}}}
\right){\log _2}\left( {{R^2} + {r^2}} \right) - \left( {v - 1}
\right)\left( {1 - \frac{{{r^2}}}{{{R^2}}}} \right){\log _2}\left(
{{R^2} - {r^2}}
\right)\notag\\
&\hspace{1cm}- \left( {3v-4} \right)\frac{{{r^2}}}{{{R^2}}}{\log
_2}r + \frac{v}{2}{\log _2}e
,\\
\label{circle average 3 1} {{\bar R}^{B,2}}&= {{\bar R}^{B,1}} +
{\log _2} {\frac{{\Gamma^2 \left(
{\frac{v}{2} - \frac{1}{2}} \right)}}{{\pi \Gamma \left( {v - 1} \right)}}}
+3\left(\frac{v}{2} - 1\right).
\end{align}
For the circularly distributed massive MIMO system with uniformly distributed users in the cell, the asymptotic average achievable rate per user of the cell, denoted as ${\bar R}_{asy}$, can be
bounded as follows:
\begin{equation}
\left\{\begin{array}{ll}
{\bar R}^{B,1}\gtrsim {\bar R}_{asy}\gtrsim {\bar R}^{B,2}, & \mbox{ if \ $ 2\le \nu \le 4$}, \\
{\bar R}^{B,1}\lesssim {\bar R}_{asy}\lesssim {\bar R}^{B,2}, & \mbox{ if \ $6 \ge \nu \ge 4$}, \\
{\bar R}_{asy}\approx{\bar R}^{B,1}\approx {\bar R}^{B,2}, & \mbox{ if \ $ \nu =2 \ or \ 4$}. \\
\end{array}\right.
\label{asy-R}
\end{equation}

\end{theorem}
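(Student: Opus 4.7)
The plan is to average the single-user bounds from Theorem \ref{capacity circle bounds theorem label} against the user-location density $f_{r_u}(x)=2x/R^2$ on $[0,R]$. Since the bounds on $R_{asy}$ hold pointwise in $r_u$ and expectation preserves inequalities, this averaging immediately produces the sign pattern of (\ref{asy-R}). What remains is to reduce $\Exp_{r_u}\{R^{B,j}\}$ to the closed forms (\ref{circle average 3}) and (\ref{circle average 3 1}); the ``$\gtrsim/\lesssim$'' symbols account for the single analytic approximation used in that reduction.

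Step one is the large-$M$ approximation $\log_2(1+PM\cdot I)\approx \log_2(PM)+\log_2 I$, which is valid because Theorem \ref{capacity circle theorem label} guarantees $PMI_0\to\infty$ as $M\to\infty$. Applied to $R^{B,1}$ in (\ref{capacity circle upper theorem}), this yields
\begin{align*}
\bar R^{B,1} \approx \log_2(PM) + \bigl(\tfrac{v}{2}-1\bigr)\,\Exp_{r_u}\log_2(r_u^2+r^2) - (v-1)\,\Exp_{r_u}\log_2|r_u^2-r^2|.
\end{align*}

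Step two evaluates the two one-dimensional expectations by means of the antiderivative $\int \log_2 u\,du = u(\log_2 u-\log_2 e)$. For the first, the substitution $u=x^2+r^2$ transforms $\Exp_{r_u}\log_2(r_u^2+r^2)$ into $\frac{1}{R^2}\int_{r^2}^{R^2+r^2}\log_2 u\,du$, which is elementary. For the second, the singularity at $r_u=r$ forces splitting the domain at $x=r$; the substitutions $u=r^2-x^2$ on $[0,r]$ and $u=x^2-r^2$ on $[r,R]$ reduce both pieces to integrals of the form $\frac{1}{R^2}\int_{0}^{\cdot}\log_2 u\,du$, which converge despite the logarithmic singularity at $0$. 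Collecting the three evaluated integrals and grouping the coefficients of $\log_2(R^2+r^2)$, $\log_2(R^2-r^2)$, $\log_2 r$, and $\log_2 e$ is intended to reproduce exactly (\ref{circle average 3}).

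The bound $\bar R^{B,2}$ then follows without additional integration: comparing (\ref{capacity circle upper theorem}) and (\ref{capacity circle upper theorem_1}), $R^{B,2}$ differs from $R^{B,1}$ only through a $v$-dependent multiplicative factor inside $\log_2(1+\cdot)$, so under the same high-SNR approximation it adds the deterministic constant $3(\tfrac{v}{2}-1)+\log_2\tfrac{\Gamma^2(v/2-1/2)}{\pi\Gamma(v-1)}$ to the expectation, giving (\ref{circle average 3 1}). I expect the main obstacle to lie in the second integral: one must justify integrability of the log-singularity at $r_u=r$, process both sub-intervals, and then combine the coefficients so that the scattered $\log_2 r$ contributions merge into the single term $-(3v-4)\tfrac{r^2}{R^2}\log_2 r$ and the stray $-\log_2 e$ terms from the antiderivatives coalesce into $+\tfrac{v}{2}\log_2 e$, which is the non-obvious part of matching (\ref{circle average 3}).
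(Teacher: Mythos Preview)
Your proposal is correct and follows essentially the same approach as the paper: average the pointwise bounds from Theorem~\ref{capacity circle bounds theorem label} over the user density $2r_u/R^2$, apply the high-SNR approximation $\log_2(1+PM\cdot I)\approx\log_2(PM\cdot I)$, then reduce the resulting integrals via the substitution $t=r_u^2$ (your $u$-substitutions amount to the same thing) with a split at $t=r^2$, and finally obtain $\bar R^{B,2}$ by adding the $v$-dependent constant. The paper's proof is slightly terser on the bookkeeping you flag as the ``non-obvious part,'' but the structure and every key step coincide.
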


\begin{proof}
With uniformly distributed user location and the probability density  function of the user distance to the cell center in (\ref{circle average 0}), the asymptotic average achievable rate per user of the cell can be calculated as:
\begin{align}\label{circle average 2}
{\bar R}_{asy}=\frac{2}{{{R^2}}}\int_0^R r_u R_{asy}(r_u) d (r_u),
\end{align}
where $R_{asy}(r_u)$ is the asymptotic achievable rate for a user at distance $r_u$. From (\ref{bound-R}) in Theorem \ref{capacity circle bounds theorem label}, we have (\ref{asy-R}), where
\[
{\bar R}^{B,1}\triangleq \frac{2}{{{R^2}}}\int_0^R r_u R^{B,1}(r_u) d r_u,\quad
{\bar R}^{B,2}\triangleq \frac{2}{{{R^2}}}\int_0^R r_u R^{B,2}(r_u) d r_u.
\]
From (\ref{capacity circle upper theorem}) in Theorem \ref{capacity circle bounds theorem label}, we have
\setlength{\arraycolsep}{1pt}
\begin{eqnarray}
{\bar R}^{B,1} &=&\frac{2}{{{R^2}}}\int_0^R {r_u{{\log }_2}\left( {1 +
PM\frac{{{{\left( {{r_u^2} + {r^2}} \right)}^{\frac{v}{2} -
1}}}}{{{{\left| {r_u^2 - r^2} \right|}^{v - 1}}}}} \right)dr_u}\label{pre-approx}\\
&\approx&\frac{2}{{{R^2}}}\int_0^R {r_u{{\log }_2}\left( {PM\frac{{{{\left( {{r_u^2} + {r^2}} \right)}^{\frac{v}{2} -
1}}}}{{{{\left| {{r_u^2} - {r^2}} \right|}^{v - 1}}}}} \right)dr_u}\label{appro-R}\\
&=& \frac{\log_2e}{R^2}\left[\int_0^{R^2} \ln\left( {PM} \right)dt
+ \left({\frac{v}{2} - 1} \right)\int_0^{R^2}\ln\left(t+r^2\right)dt \right.\\
&&\hspace{6cm} \left. -\left(v - 1\right)\int_0^{R^2}\ln\left| t - r^2\right| dt\right]\nonumber \\
&=&\log_2(PM)+\left(\frac{v}{2}- 1\right)\log_2e
\left[\ln(R^2+r^2)+\frac{r^2}{R^2}\ln\frac{R^2+r^2}{r^2}-1\right] \nonumber\\
&&\hspace{2.5cm} - \frac{v - 1}{R^2}\log_2e\left[\int_0^{r^2} \ln(
r^2-t)dt
+\int_{r^2}^{R^2} \ln(t-r^2) dt\right] \nonumber\\
&=&\log_2(PM)+\left(\frac{v}{2}- 1\right)\log_2e
\left(\ln(R^2+r^2)+\frac{r^2}{R^2}\ln\frac{R^2+r^2}{r^2}-1\right) \nonumber\\
&&\hspace{2cm} - (v - 1)\log_2e\left[\frac{r^2}{R^2}\ln r^2
+\left(1-\frac{r^2}{R^2}\right)\ln(R^2-r^2)-1\right], \nonumber
\end{eqnarray}
\setlength{\arraycolsep}{5pt}
from which we can obtain (\ref{circle average 3}) via simple rewriting.

In deriving (\ref{appro-R}), we have used the approximation $\log(1+x)\approx\log x$ for $x\gg1$. When $M\gg1$, we can see from (\ref{pre-approx}) that the approximation applies. With straightforward and similar calculations, we can obtain (\ref{circle average 3 1}).
\end{proof}

It is evident that our derived bounds on the asymptotic average
achievable rate in Theorem \ref{circle average capacity} are in
closed-form. Also, calculating the difference between the two bounds, we
have\begin{align}\label{capacity bounds differences} \left| {{\bar
R}^{B,1} - {\bar R}^{B,2}} \right| = \left| {\log _2}
{\frac{{\Gamma^2 \left( {\frac{v}{2} - \frac{1}{2}} \right)}}{{\pi
\Gamma \left( {v - 1} \right)}}} +3\left(\frac{v}{2} - 1\right)
\right|   < 0.6.
\end{align}
Thus, either bound can be used as a tight approximation of
$\bar{R}_{asy}$ with the error being less than 0.6 bits/s/Hz. The
error is negligible when $M \gg 1$ since $\bar{R}_{asy}$ increases
in logarithm in $M$. The tightness of the bounds will also be justified by our simulation results in Section \ref{sec-simu}.

\section{Location Optimization of the Circular Antenna Array}
In the previous section, the ergodic achievable rate of an
arbitrarily located user and the average achievable rate per user of
the cell for uniformly located users are derived when $M\rightarrow
\infty$. We can see from the results that other than the transmit
power $P$ and the number of the distributive antennas $M$, the
radius of the distributed antennas $r$ largely affects the average
achievable rate. In this section, we turn to derive the optimal
radius of the circularly distributed antenna array to maximize the
average achievable rate of the cell, which is one of the most
important measures of wireless system performance.

In Theorem \ref{circle average capacity}, both the upper and lower bounds, ${\bar R}^{B,1}$ and ${\bar R}^{B,2}$, are derived for the average achievable rate per user. The bounds are in closed-form and shown to be close to each other. Thus, in the radius optimization, we aim at maximizing ${\bar R}^{B,1}$. The same result can be obtained if ${\bar R}^{B,2}$ is used since the difference ${\bar R}^{B,2}-{\bar R}^{B,1}$ only depends on $v$ and is independent of $r$, the radius of the circular antenna array.

\begin{Lemma}\label{circle average opt theorem}
The radius of the circular antenna array for the distributed massive MIMO system that maximizes ${\bar R}^{B,1}$ is:
\begin{align}\label{circle average opt}
{r_{opt}} = \sqrt {\frac{{{R^2}}}{{{t_0} + 1}}}
\end{align}where $t_0$ is the solution of the following equation:
\begin{align}\label{circle average 6}
x^{3+\frac{{2}}{{v - 2}}} + 2x^{2+\frac{2}{v - 2}} - 1 = 0.
\end{align}
\end{Lemma}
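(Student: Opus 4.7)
The plan is to derive the optimal radius by taking the derivative of the closed-form bound $\bar{R}^{B,1}$ in (\ref{circle average 3}) with respect to $r$, setting it to zero, and then massaging the resulting transcendental equation into the polynomial form in (\ref{circle average 6}). To reduce the amount of symbolic clutter, I would work with the normalized variable $u \triangleq r^2/R^2 \in (0,1)$, so that $1+r^2/R^2 = 1+u$, $1-r^2/R^2 = 1-u$, $R^2+r^2 = (1+u)R^2$, $R^2-r^2=(1-u)R^2$, and $\log_2 r = \tfrac{1}{2}\log_2(uR^2)$.

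First I would substitute into (\ref{circle average 3}) and separate out the terms involving $\log_2 R$. A direct expansion shows that the coefficient of $2\log_2 R$ is
\begin{equation*}
\Bigl(\tfrac{v}{2}-1\Bigr)(1+u) - (v-1)(1-u) - \tfrac{3v-4}{2}u = -\tfrac{v}{2},
\end{equation*}
i.e.\ independent of $u$. Therefore the maximization over $r$ (equivalently over $u$) is the same as maximizing
\begin{equation*}
f(u) \triangleq \Bigl(\tfrac{v}{2}-1\Bigr)(1+u)\log_2(1+u) - (v-1)(1-u)\log_2(1-u) - \tfrac{3v-4}{2}u\log_2 u.
\end{equation*}
Differentiating and noting that the $\log_2 e$ constants from the product-rule terms again cancel by the same coefficient identity, I get
\begin{equation*}
f'(u) = \Bigl(\tfrac{v}{2}-1\Bigr)\log_2(1+u) + (v-1)\log_2(1-u) - \tfrac{3v-4}{2}\log_2 u,
\end{equation*}
so that $f'(u)=0$ is equivalent to
\begin{equation*}
(1+u)^{\frac{v}{2}-1}(1-u)^{v-1} = u^{\frac{3v-4}{2}}.
\end{equation*}

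Next I would introduce the substitution $t = (1-u)/u$, i.e.\ $u = 1/(t+1)$, which is exactly $r^2 = R^2/(t+1)$ as claimed in (\ref{circle average opt}). Using $1+u=(t+2)/(t+1)$ and $1-u=t/(t+1)$, the $(t+1)$ factors combine with exponent $(3v-4)/2-(v/2-1)-(v-1)=0$, leaving $t^{v-1}(t+2)^{v/2-1}=1$. Raising both sides to the power $2/(v-2)$ (assuming $v\neq 2$) gives $t^{(2v-2)/(v-2)}(t+2)=1$, which when multiplied out and rearranged becomes $t^{3+\frac{2}{v-2}}+2t^{2+\frac{2}{v-2}}-1=0$, matching (\ref{circle average 6}) with $t=t_0$.

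Finally, to confirm that the stationary point is a maximum I would inspect the boundary behavior of $\bar{R}^{B,1}$ on $u\in(0,1)$: as $u\to 0^+$ the $-(3v-4)u\log_2 r$ term is bounded while the $-(v-1)(1-u)\log_2(R^2-r^2)$ term stays finite, but one can verify that placing the antenna circle arbitrarily close to the center or to the cell boundary makes the path loss to users near the antennas either dominate or vanish in a manner that drives $\bar{R}^{B,1}$ below the interior value, so $f$ attains its maximum in the interior. The main obstacle I anticipate is not the derivative computation itself but verifying that (\ref{circle average 6}) has a unique root in $(0,\infty)$ corresponding to a single interior maximum; this can be handled by observing that the left-hand side is strictly increasing in $t$ for $t>0$ (since each factor is increasing and positive), so the root $t_0$ is unique, and the corresponding $u=1/(t_0+1)\in(0,1)$ lies in the feasible region, giving a unique optimum $r_{opt} = \sqrt{R^2/(t_0+1)}$.
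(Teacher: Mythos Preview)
Your proposal is correct and follows essentially the same route as the paper: differentiate $\bar{R}^{B,1}$, set the derivative to zero, and use the substitution $t=R^2/r^2-1$ (your $t=(1-u)/u$ with $u=r^2/R^2$ is exactly the same change of variable) to obtain the polynomial equation. The only minor difference is in the maximality argument: the paper notes directly that $(r\log_2 e/R^2)^{-1}\,d\bar{R}^{B,1}/dr$ is monotone decreasing in $r$, so the derivative changes sign exactly once from positive to negative; this is cleaner than your boundary discussion and would be a simpler way to finish, though your monotonicity observation on the left-hand side of (\ref{circle average 6}) achieves the same uniqueness conclusion.
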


\begin{proof}
The derivative of $\bar{R}^{B,1}$ in (\ref{circle average
3}) with respect to $r$ can be calculated to be:
\begin{equation}\label{circle average 4}
\frac{{d{{\bar R}^{{B_1}}}}}{{dr}} = \frac{{r{{\log }_2}e}}{{{R^2}}}\left\{ {\left( {v - 2} \right)\ln \left( {\frac{{{R^2}}}{{{r^2}}} + 1} \right) + \left( {2v - 2} \right)\ln \left( {\frac{{{R^2}}}{{{r^2}}} - 1} \right)} \right\}
\end{equation}
By making $d{{\bar R}^{{B_1}}}/dr$ zero, we have\begin{equation} {\left( {\frac{{{R^2}}}{{{r^2}}} + 1}
\right)^{\frac{{v/2 - 1}}{{v - 1}}}}  \left(
{\frac{{{R^2}}}{{{r^2}}} - 1} \right) = 1. \label{circle average 5}
\end{equation}
 After replacing $R^2/r^2 - 1$ with $t$ and rearranging the
expression in (\ref{circle average 5}), we obtain (\ref{circle
average 6}).

Next we show that the solution of (\ref{circle average 5}), denoted as $r_{opt}$, is the maximum of ${\bar R}^{{B_1}}$. From (\ref{circle average 4}), we can see that $\left(\frac{{r{{\log }_2}e}}{{{R^2}}}\right)^{-1}\frac{d{{\bar R}^{{B_1}}}}{dr}$ is a decreasing function of $r$. Thus we have $\left(\frac{{r{{\log }_2}e}}{{{R^2}}}\right)^{-1}\frac{d{{\bar R}^{{B_1}}}}{dr}>0$ when $r<r_{opt}$ and $\left(\frac{{r{{\log }_2}e}}{{{R^2}}}\right)^{-1}\frac{d{{\bar R}^{{B_1}}}}{dr}<0$ when $r>r_{opt}$. Notice that $\frac{{r{{\log }_2}e}}{{{R^2}}}>0$. Thus we have $d{{\bar R}^{{B_1}}}/dr>0$ when $r<r_{opt}$ and $d{{\bar R}^{{B_1}}}/dr<0$ when $r>r_{opt}$.  This ends the proof.
\end{proof}

The equation in (\ref{circle average 6}) only depends on the path-loss exponent $v$ and can be easily solved offline by many softwares such as Matlab. Using the result in Lemma \ref{circle average opt theorem}, the radius of the distributed circular BS antenna array can be designed for the maximum average achievable rate for the distributed massive MIMO system.

From (\ref{circle average opt}) we have $r_{opt}/R=1/\sqrt{t_0+1}$.
Thus, the ratio of the antenna radius and the cell radius depends on
the path-loss exponent $v$ only and is independent of the transmit
power $P$ and the BS antennas size $M$. This is very appealing in
wireless network designs and implementation. Based on this fact, we
further note that the improvement of the hardware in distributed
MIMO, such as increasing the number of the distributed antennas,
will not affect the optimal location of the distributive antennas.

\section{Numerical results}
\label{sec-simu}

In this section, we present numerical results to show the
performance of the distributed massive MIMO system with circular BS
antenna array and justify the accuracy of our theoretical results.
The impacts of different parameters, such as the number and the
location of the distributive antennas, the transmit power, and the
path-loss exponent, on the achievable rate are also investigated. We
consider the uplink of a circular cell whose cell radius is set as
$R=1000$ meters. There is a massive BS with $M$ circularly
distributed antennas located on a circle of radius $r$. There are
$K=9$ users. The users have the same transmit power, which is set to
be $P \times r_{mid}^{v}$ where $r_{mid}=R/2=500$ meters. So, if a
user is located 500 meters away from a BS antenna, the average
received SNR of the antenna from the user is $P$. The normalization
with $r_{mid}^{v}$ in the transmit power does not affect the
behavior of the simulation curves but only affects the position of
the curves on the $P$ axis. The small-scale channel fading $h_{mk}$
is generated as circularly symmetric complex Gaussian with zero-mean
and unit-variance, thus Rayleigh fading.

\subsection{Achievable Rate of an Arbitrarily Located User}
In Figure \ref{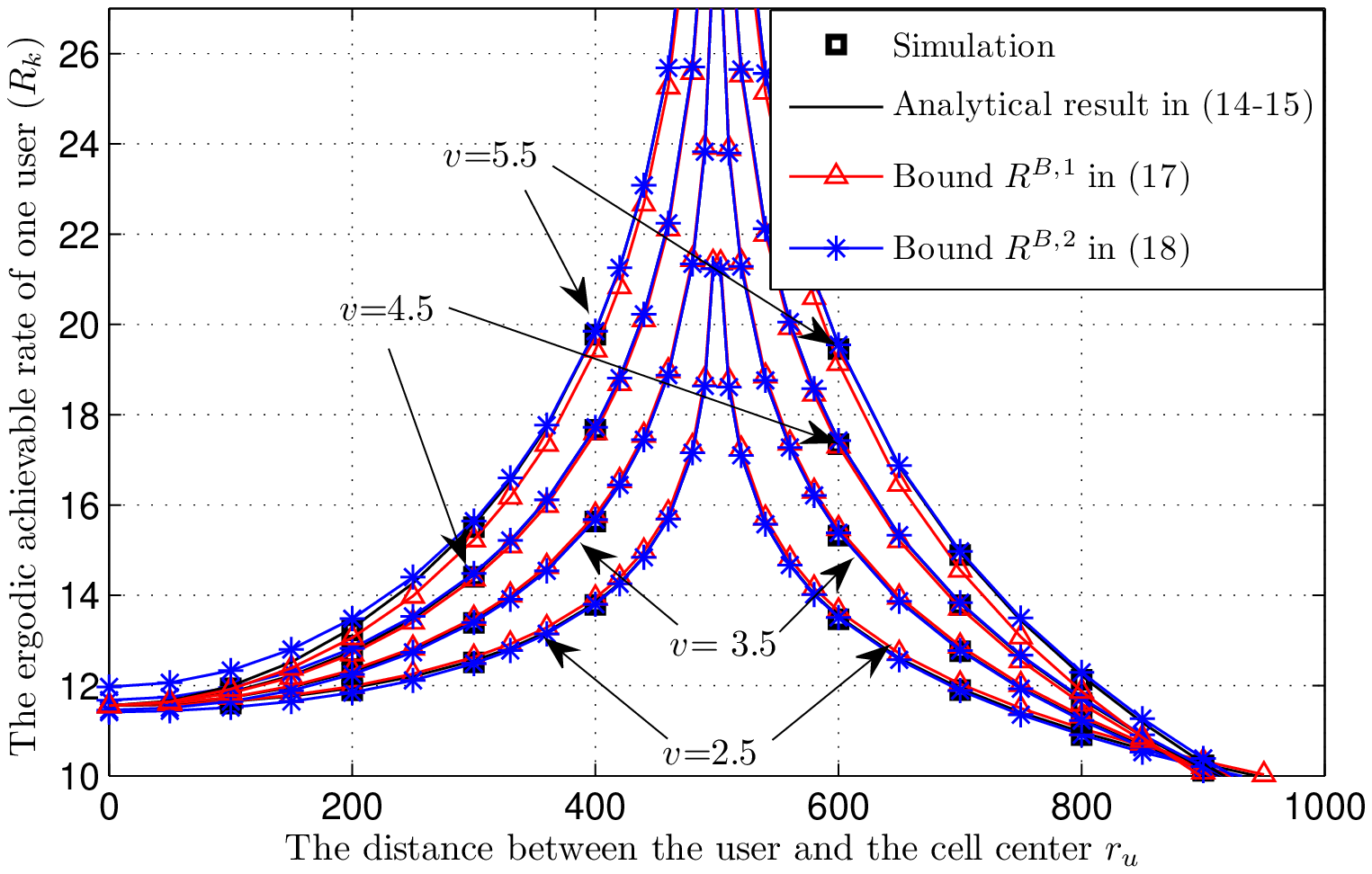} and Figure
\ref{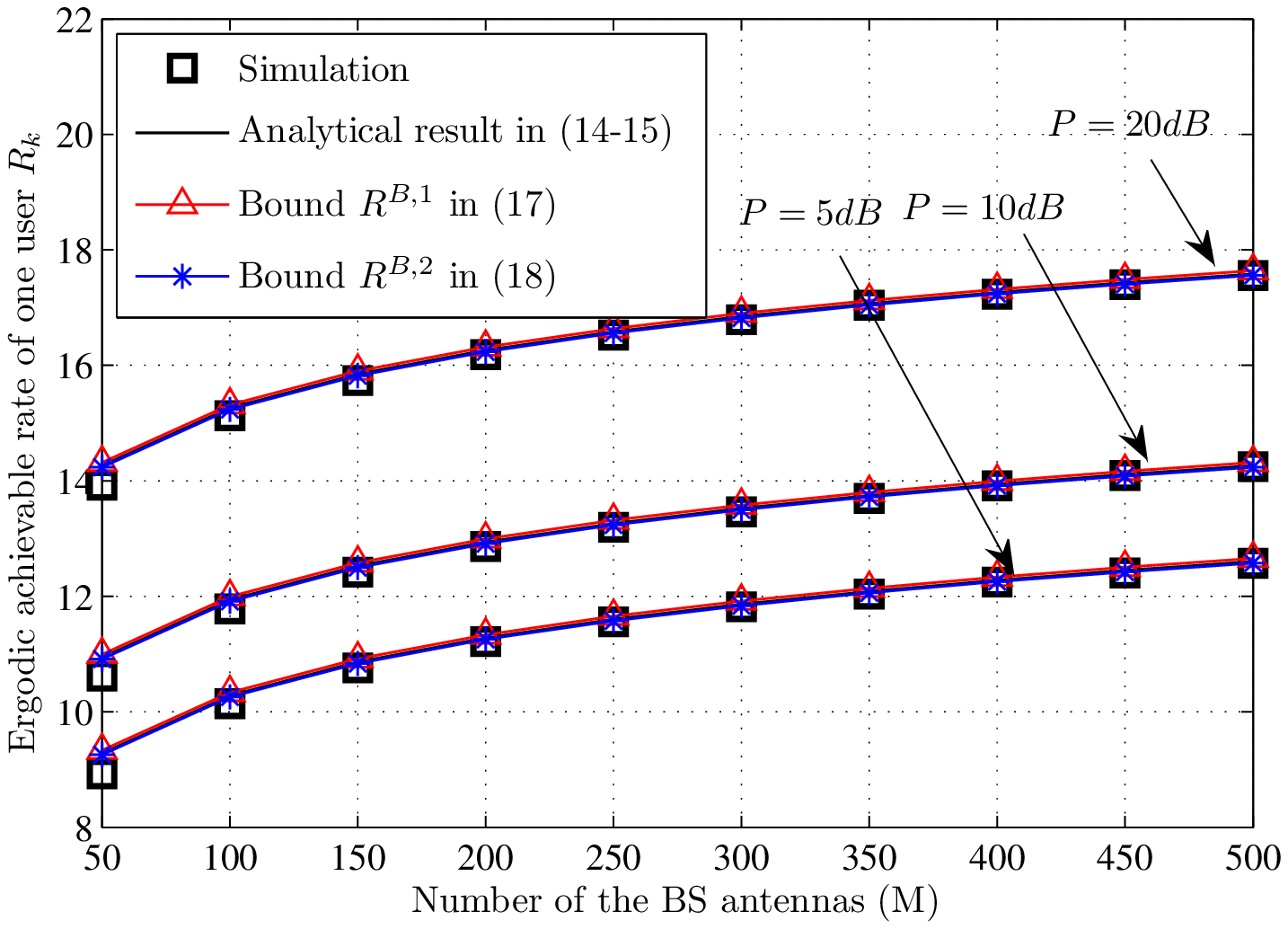}, we show the simulated
achievable rate of an arbitrarily user and compare with the derived
asymptotic analytical results in (\ref{capacity circle
theorem}-\ref{capacity circle theorem 1}), as well as the
closed-form bounds $R^{B,1}$ in (\ref{capacity circle upper
theorem}) and $R^{B,2}$ in (\ref{capacity circle upper theorem_1}).
We set $r=500$ meters. Figure \ref{bound_capacity_new.eps} shows the
achievable rate for different user location $r_u$ (the distance of
the user to the cell center) and path-loss exponent $v$, while
$M=300$ and $P=10$dB. Figure
\ref{capacity_different_M_single_user.eps} shows the achievable rate
for different antenna number $M$ and user transmit power $P$, while
$r_u=300$ meters and $v=3.6$ \cite{rvq:david}.

We can see from the figures that the curves numerical calculated by
(\ref{capacity circle theorem}-\ref{capacity circle theorem 1})
accurately predict the simulated ones. The derived closed-form
bounds in (\ref{capacity circle upper theorem}) and (\ref{capacity
circle upper theorem_1}) are very close to the simulated and
numerically calculated achievable rates. The figures show that $R^{B,1}$
is a lower bound and $R^{B,2}$ is an upper bound when $v \ge 4$,
while $R^{B,1}$ is an upper bound and $R^{B,2}$ is a lower bound
when $v \le 4$, which confirms the results of (\ref{bound-R}).

\begin{figure}[!t]
\centering
\includegraphics[width=5.5in]{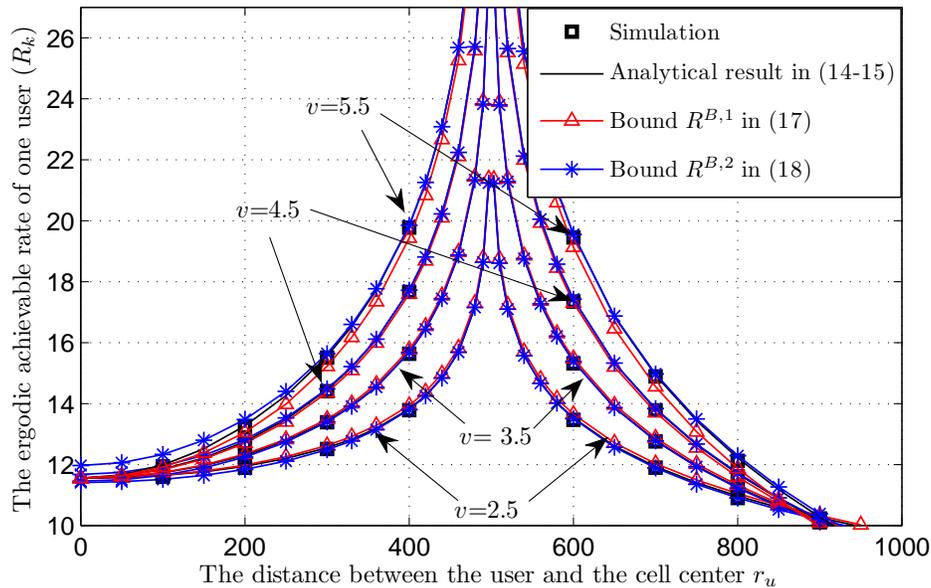}
\caption{Comparison of the analytical expressions and bounds of the
ergodic achievable rate of distributed massive MIMO with simulation,
where $K=9$, $r=500$ meters, $M=300$ and $P=10$dB.}
\label{bound_capacity_new.eps}
\end{figure}

\begin{figure}[!t]
\centering
\includegraphics[width=5.5in]{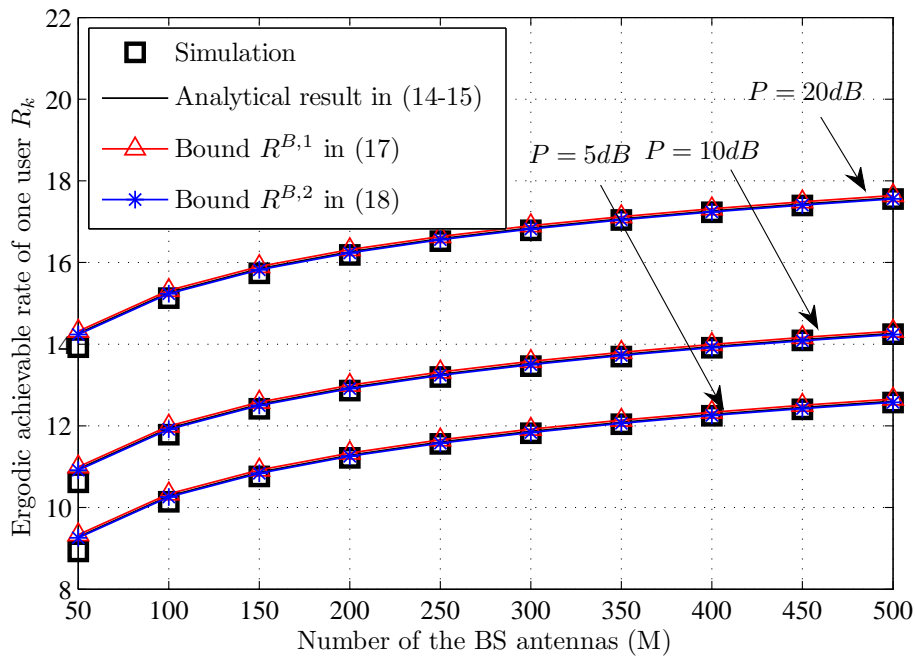}
\caption{The ergodic achievable rate of one user for different $M$
where $K=9$, $r=500$ meters, $r_u=300$ meters, and $v=3.6$.}
\label{capacity_different_M_single_user.eps}
\end{figure}


Figure \ref{bound_capacity_new.eps} also shows that the achievable
rate is higher for larger path-loss exponent and smaller distance
between the user distance $r_u$ and the radius of the circular
antenna array $r$. For either $r_u>r$ or $r_u<r$, the achievable
rate is a concave function of $r_u$. We can see from Figure
\ref{capacity_different_M_single_user.eps} that the achievable rate
increases with $M$. For example, increasing $M$ from $100$ to $400$
brings an achievable rate increase of  about $16\%$ at $P=10$dB. For
the $P=20$dB case, increasing $M$ from $100$ to $400$ brings about
an achievable rate increase of $12\%$. The achievable rate also
increases with $P$, the user transmit power. For example, increasing
$P$ from $5$dB to $20$dB results in the achievable rate increase of
about $40\%$ at $M=300$.

\begin{figure}[!t]
\centering
\includegraphics[width=5.5in]{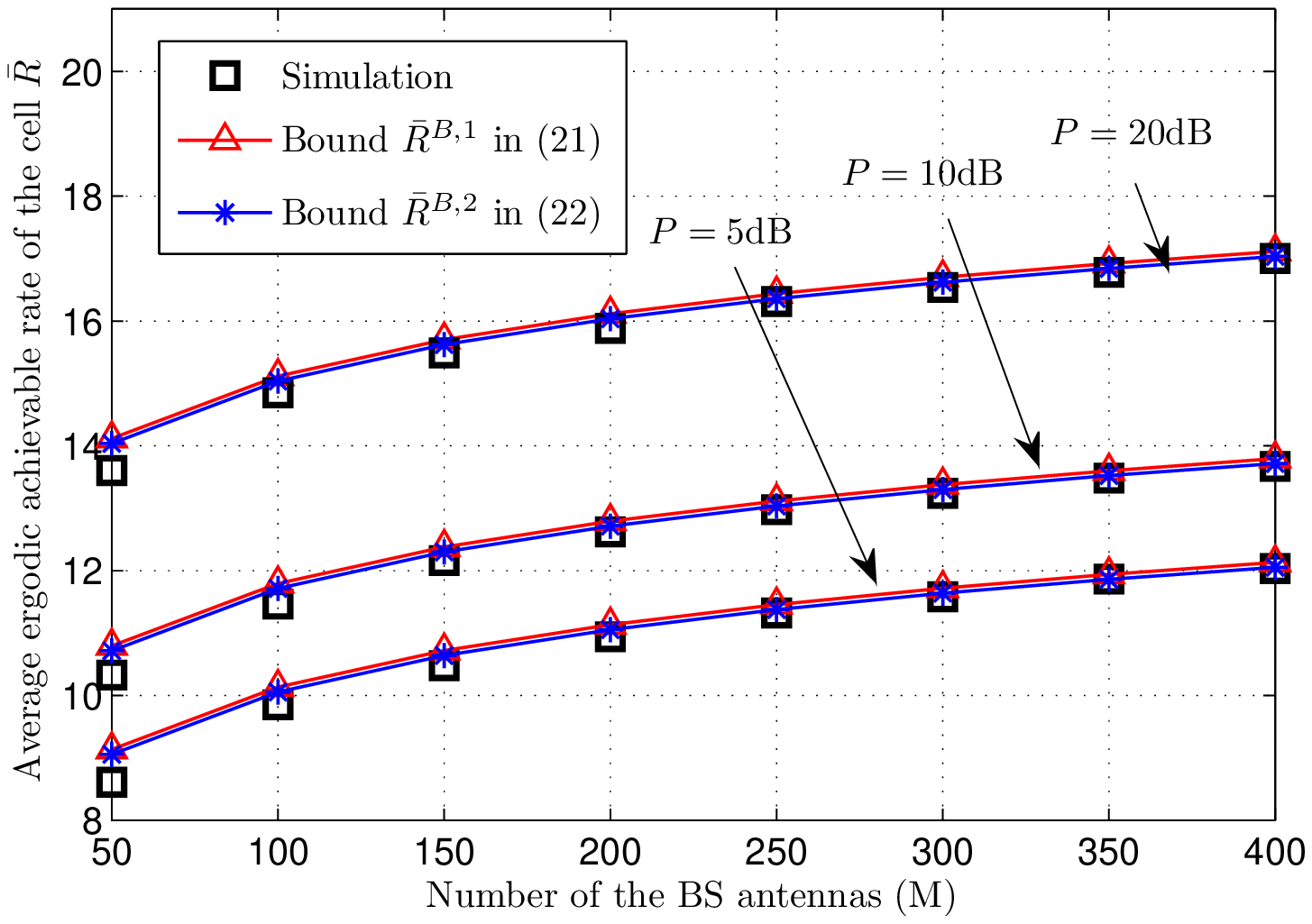}
\caption{The average ergodic achievable rate of the cell for
different $M$ where $K=9$, $r=500$ meters, and $v=3.6$.}
\label{capacity_different_M_the_cell.eps}
\end{figure}

\begin{figure}[!t]
\centering
\includegraphics[width=5.5in]{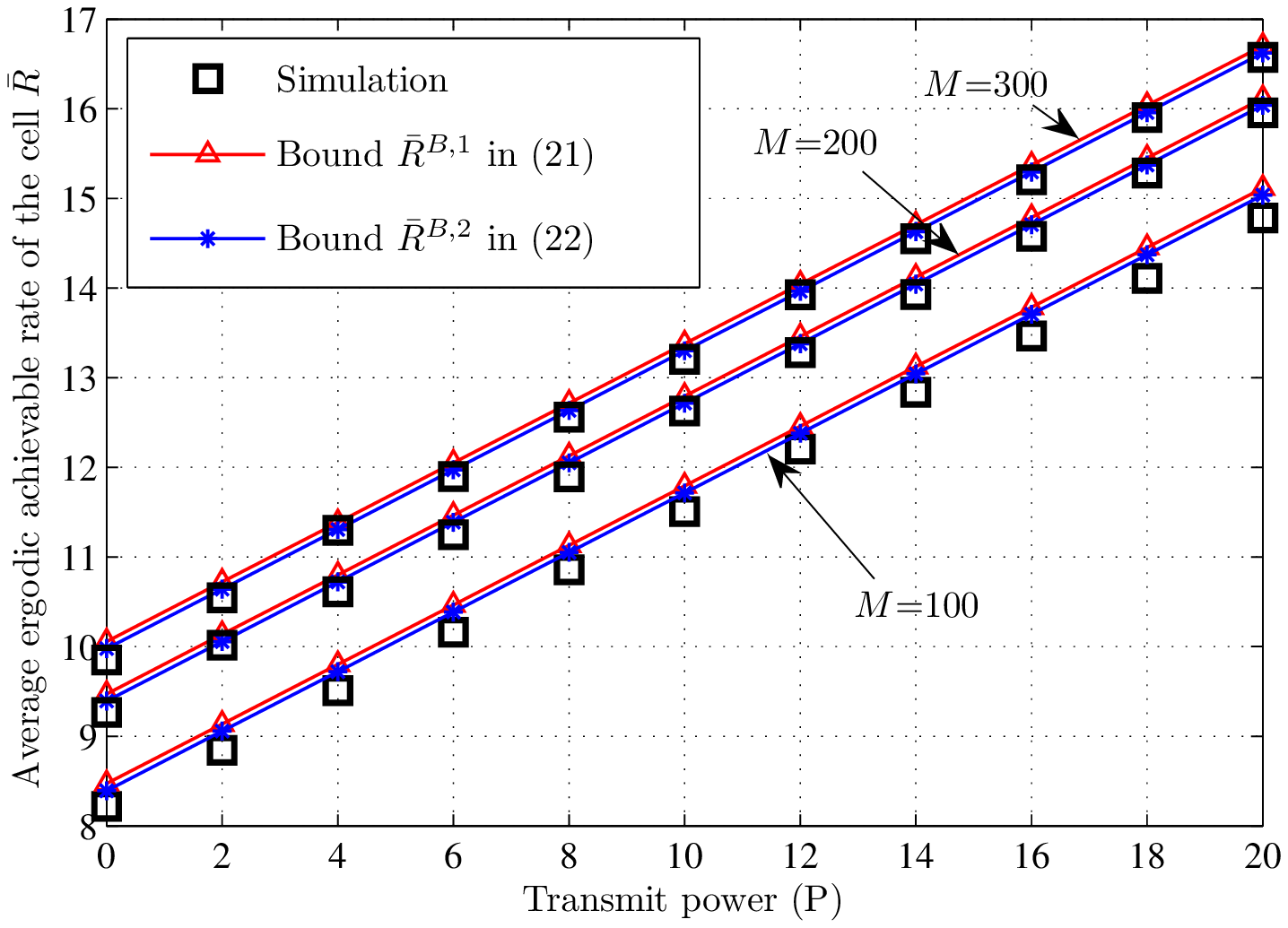}
\caption{The average ergodic achievable rate of the cell for
different $P$ where $K=9$, $r=500$ meters, and $v=3.6$.}
\label{capacity_different_P_the_cell.eps}
\end{figure}
\subsection{Average Achievable Rate  of the Cell}

In Figure \ref{capacity_different_M_the_cell.eps} and Figure
\ref{capacity_different_P_the_cell.eps}, we show the average
achievable rate per user of the cell, and compare with the derived
bounds $\bar{R}^{B,1}$ in (\ref{circle average 3}) and
$\bar{R}^{B,2}$ in (\ref{circle average 3 1}). We set $r=500$ meters
and assume a practical urban scenario with the path-loss exponent
$v=3.6$. The user location are randomly generated to be uniformly
distributed in the cell.

It can be seen from both figures that the simulated achievable rate
and the derived closed-form bounds match well for all adopted values
of $M$ and $P$. The average achievable rate of the cell increases
with $M$, which indicates that increasing the number of the BS
antennas improves the system throughput. For example, the $M=400$
scenario achieves about $15\%$ higher average rate than the $M=100$
scenario at $P=10$dB. The average achievable rate also increases
with $P$. For example, increasing $P$ from $4$dB to $14$dB brings an
achievable rate advantage of about $35\%$ at $M=100$.

\subsection{Impact of the Location of the Circularly Antenna Array}
Next, we show the impart of the radius of circular antenna array,
$r$, on the average achievable rate of the cell. Notice that the
$r=0$ case corresponds to centralized massive MIMO system, where the
BS antennas are located at the center of the cell.

Figure \ref{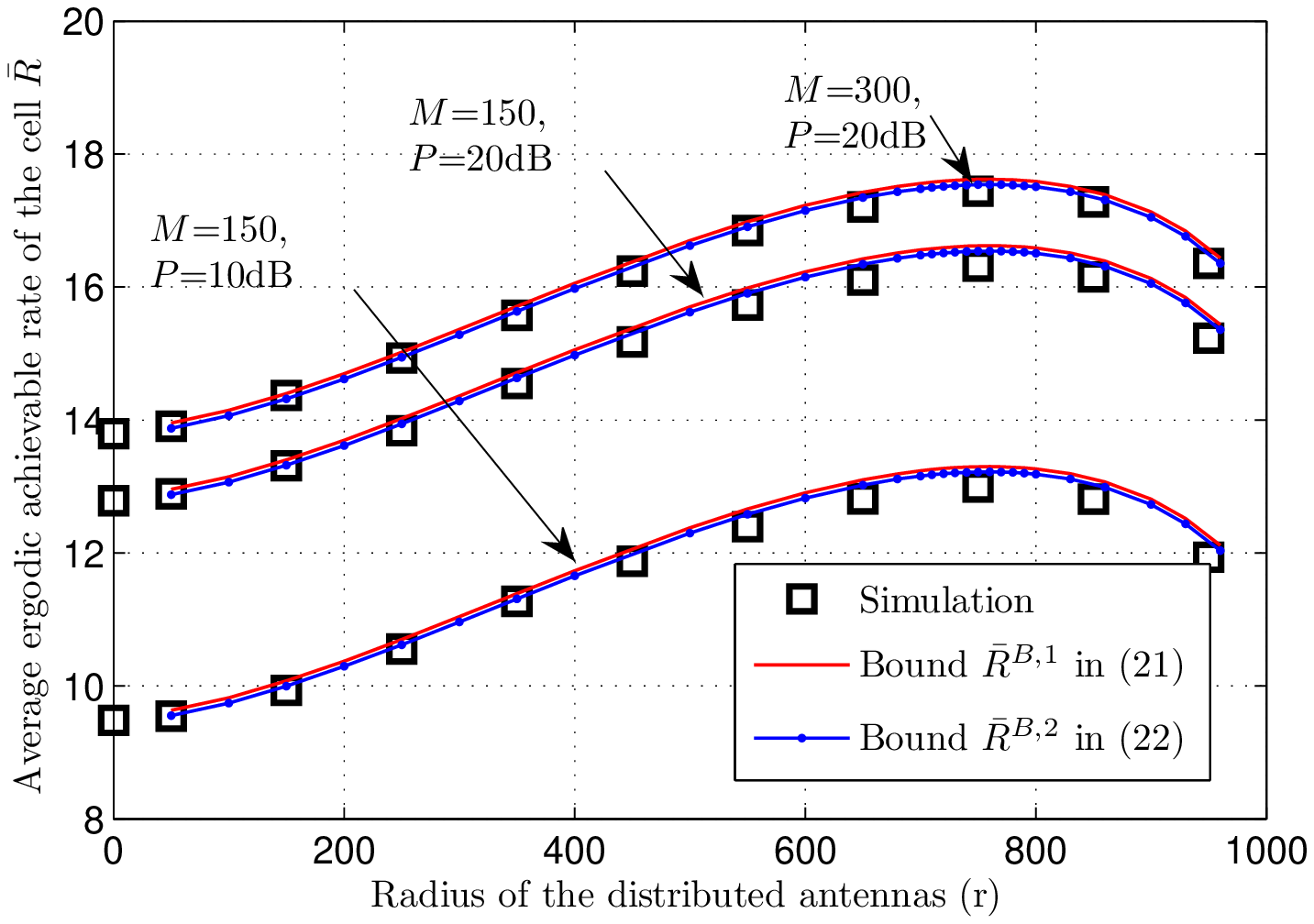} plots the simulated average rate of the cell and the derived bounds as functions of $r$ for three cases: 1) $M=150,P=10$dB, 2) $M=150, P=20$dB, and 3) $M=300,P=20$dB. We set $v=3.6$. The figure shows that the radius of the distributed antenna array has significant influence on the average rate of the cell and proper antenna location results in significant improvement in average rate to the centralized case. For example, increasing $r$ from $0$ to $750$ meters boosts up the average rate by about $30\%$ when $M=150,P=20$dB. The figure also indicates that the optimal $r$ for different $M$ and $P$ remains the same, which is about $750$ meters. This conforms with our result in Lemma \ref{circle average opt theorem} that the optimal $r$ is irrelevant to the values of $M$ and $P$ but only depends on $v$.

\begin{figure}[!t]
\centering
\includegraphics[width=5.5in]{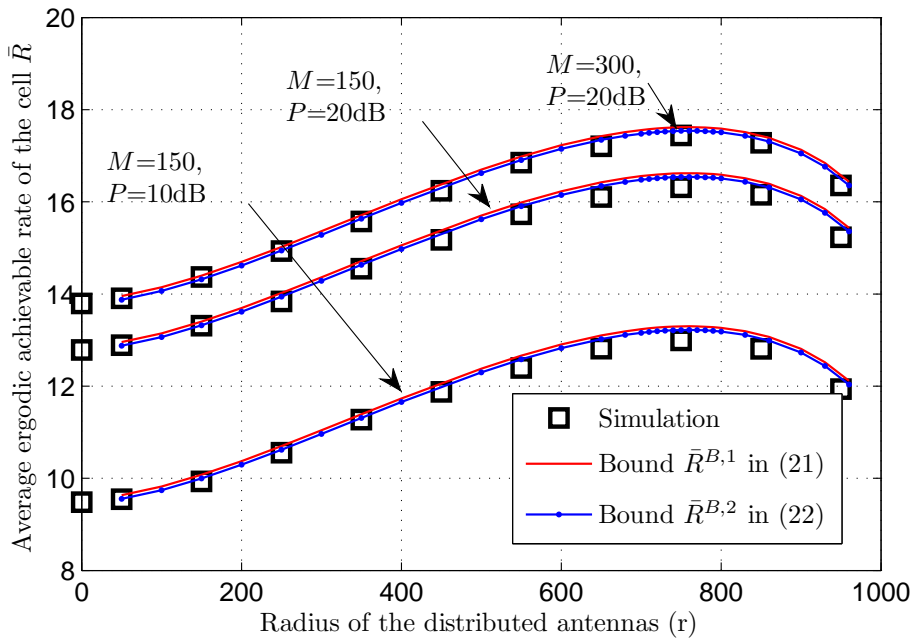}
\caption{The average ergodic achievable rate of the cell for
different $r$ where $M=150,P=10$dB, $M=150, P=20$dB, and $M=300,P=20$dB. We set $K=9$ and $v=3.6$.} \label{capacity_r_M_and_P_average.eps}
\end{figure}

To further understand the optimal radius of the circular antenna
array, employing (\ref{circle average opt}) in Lemma \ref{circle
average opt theorem}, we plot $r_{opt}/R$, the ratio of the optimal
antenna array radius to the cell radius, for different  path-loss
exponent, in Figure \ref{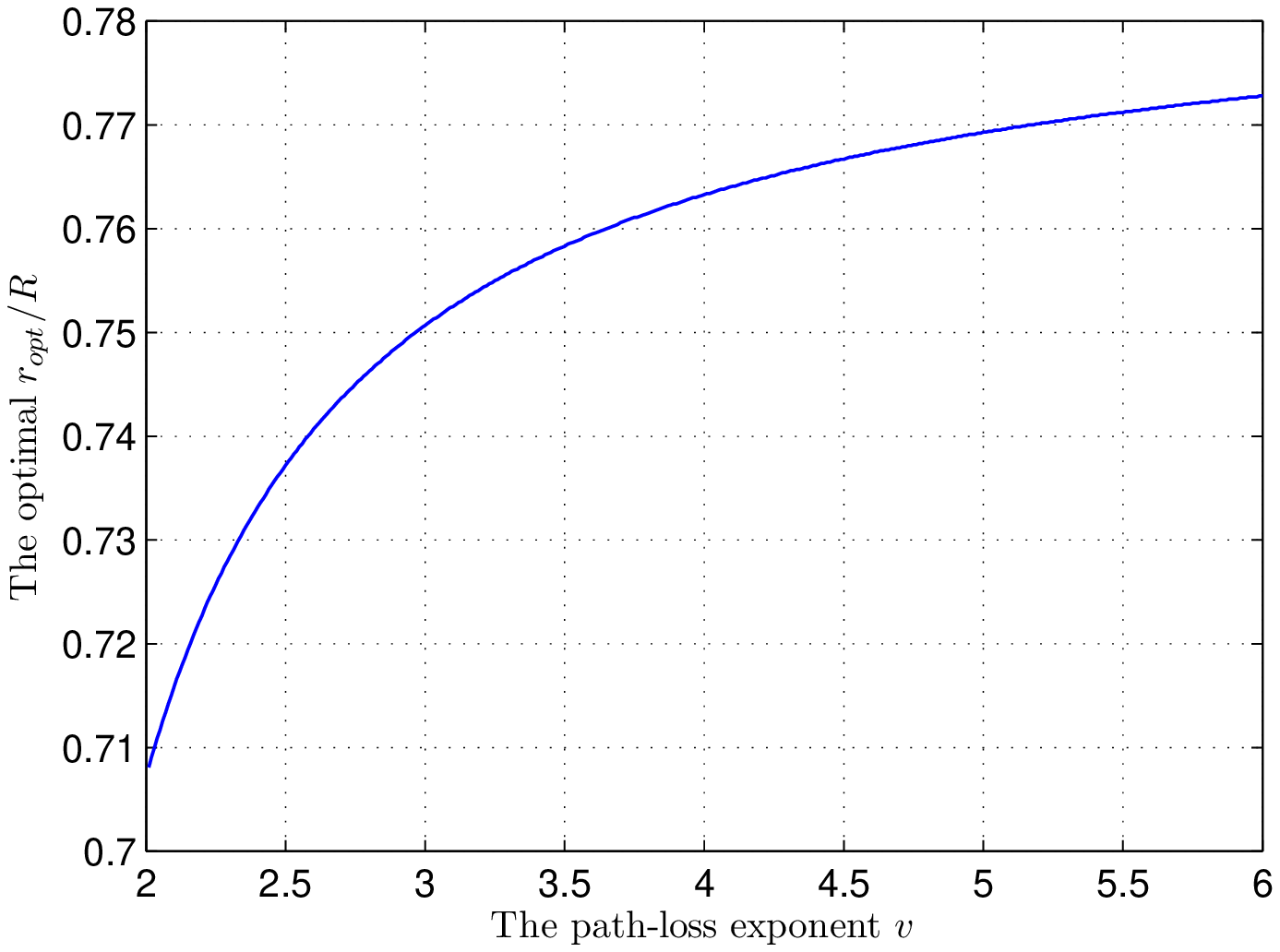}. It can be seen
that $r_{opt}/R$ is bigger for larger $v$. For example, when
$v=3.5$, $r_{opt}/R$ is $0.758$, while it is $0.766$ when $v=4.0$.
Thus, as the path-loss is larger, antennas should be installed
further away from the cell center for the maximum average rate. For
a given $v$ value, the radius of the circular antennas array should
increases linearly in the cell radius $R$. We can also see that for
any $v\in[2,6]$, $r_{opt}/R\in[0.7,0.78]$. This shows that the
optimal $r_{opt}/R$ value is far away from the centralized massive
MIMO case, where $r/R=0$. On the other hand, for different $v$
values within the practical range ($v \in [2,6]$), $r_{opt}/R$ has
small change. Actually, for any $v\in[2,6]$, setting the radius of
the circular antenna array as $r=0.75R$ will induce less than $5\%$
loss in the average rate compared to the optimal radius. This result
is useful in further simplifying the practical system design of
circularly distributed massive MIMO systems.

\begin{figure}[!t]
\centering
\includegraphics[width=5.5in]{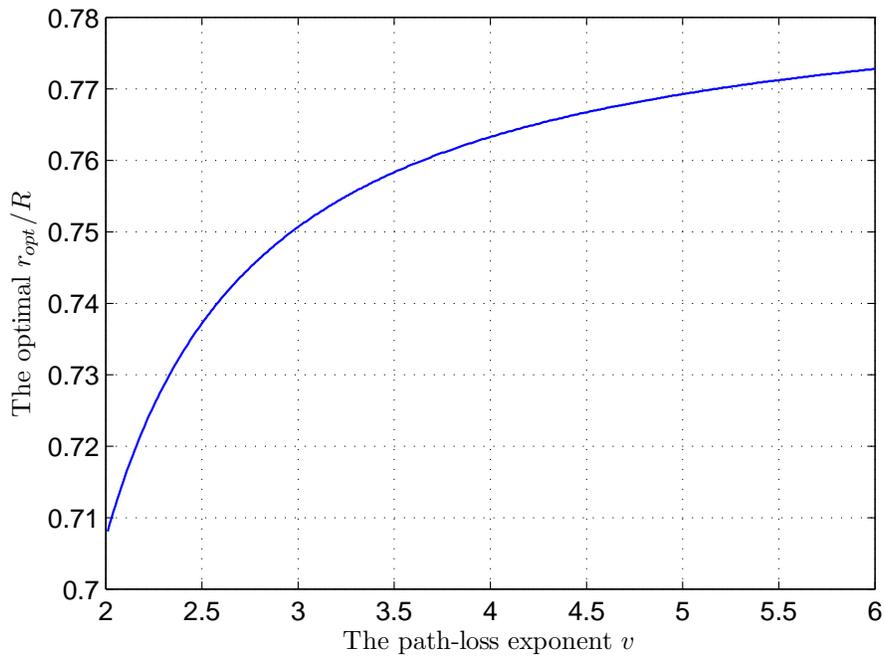}
\caption{The relationship between $r_{opt}/R$ and the
path-loss exponent $v$.} \label{capacity_v_r_optimal.eps}
\end{figure}

\section{Conclusions}
In this paper, we have considered the uplink of a single-cell
multi-user distributed massive MIMO system, where the BS equipped
with a large number of distributed antennas receiving information
from multiple users equipped with single antenna. Zero-forcing
detection is used at the BS. In order to analyze the achievable rate
of the system, we provided new results for very long random vectors
with independent but non-identically distributed entries. Based on
the results, for circularly distributed base station antennas, we
derived analytical expressions of the achievable rate of an
arbitrarily located user and two closed-form expressions that bound
the rate from both sides. The tightness of the bounds were
rigorously justified. From these results, behavior of the system
achievable rate with respect to different parameters such as the
size of the antenna array, the location of the antenna array, the
path-loss exponent, and the transmit power can be understood. We
also derived tight closed-form bounds for the average achievable
rate per user assuming that users are randomly located in the cell,
from which the optimal radius of the distributed antenna array that
maximizes the average rate was derived. Numerical results were
illustrated to justify our analytical results. Our work has shown
that multi-user distributed massive MIMO largely outperforms
centralized massive MIMO. Our derived results can assist
infrastructure providers in solving the fundamental problems of
performance measurement and antennas placement for distributed
massive MIMO systems in practice.

\appendices
\section{Proof of Lemma 1}
\label{app-A}
Let $a_i\triangleq \left| {{p_i}} \right|^2$. Thus $a_i$'s are independent and there exists a finite positive constant $C$ such that
\begin{align}\label{proof 1}
\Exp\left\{ {{a_i}} \right\}
=\Exp\left\{ {{{\left| {{p_i}} \right|}^2}}
\right\} = \sigma _{p,i}^2,\quad
\Exp\left\{ {{a_i^2}} \right\}=\Exp\left\{
{{{\left| {{p_i}} \right|}^4}} \right\} \le C.
\end{align}
The variance of the arithmetic mean of $a_1,a_2,\ldots,a_M$
satisfies the following:
\begin{align}\label{proof 2}
&Var\left\{ {\frac{1}{M}\sum\limits_{i = 1}^M {{a_i}} } \right\} =
\frac{1}{{{M^2}}}\sum\limits_{i = 1}^M {Var\left\{ {{a_i}} \right\}}
\le \frac{C}{M}.
\end{align}

From Chebyshev's inequality, we have
\def\Prob{{\mathbb P}}
\begin{align}\label{proof 3}
&\Prob\left\{ {\left| {\frac{1}{M}\sum\limits_{i = 1}^M {{a_i}}  -
\frac{1}{M}\sum\limits_{i = 1}^M {\Exp\left\{ {{a_i}} \right\}} }
\right| < \varepsilon } \right\}
 \ge 1 - \frac{1}{\epsilon^2}Var\left\{ {\frac{1}{M}\sum\limits_{i = 1}^M {{a_i}} } \right\} \ge 1 - \frac{C}{{M{\varepsilon ^2}}},
\end{align}where (\ref{proof 2}) is used in the last step.

From the definition of $a_i$, we have
\begin{align}\label{proof 4}
 \frac{1}{M}{{\bf{p}}^H}{\bf{p}} = \frac{1}{M}\sum\limits_{i = 1}^M
{ {{{\left| {{p_i}} \right|}^2}} }  = \frac{1}{M}\sum\limits_{i =
1}^M {{a_i}}.
\end{align}
Using this in (\ref{proof 3}), we obtain
\begin{align}\label{proof 5}
1 \ge \Prob\left\{ {\left| {\frac{1}{M}{{\bf{p}}^H}{\bf{p}} -
\frac{1}{M}\sum\limits_{i = 1}^M {\sigma _{p,i}^2} } \right| <
\varepsilon } \right\} \ge 1 - \frac{C}{{M{\varepsilon ^2}}}.
\end{align}When $M \to \infty $, $1 - \frac{C}{{M{\varepsilon ^2}}}\rightarrow 1$. Thus, Eq. (\ref{large number 1}) is
proved.

Since $p_i$ and $q_i$ are independent, we have $\Exp\left\{
{p_i^H{q_i}} \right\} = 0$, $i=1,2,\ldots,M$. Following the same
arguments in the proof of (\ref{large number 1}), Eq. (\ref{large number
2}) can be proved.


\section{Proof of Theorem 3}
\label{app-B}
As shown in Figure \ref{circle-eps}, we use $O$ for the cell center. To help the derivation, we denote the angle of the segments $OT_m$ (where $T_m$ is the location of the $m$th BS antenna) and $OU$ (where $U$ is the location of the user) as $\alpha _{m}$. The distance between $T_M$ and $U$, denoted as ${{D}_{m}} $, can be expressed as
\begin{align}\label{circle case 1}
{{D}_{m}} = \sqrt {{r^2}{{\sin }^2}{\alpha _{m}} + {{\left( {r\cos
{\alpha _{m}} - r_u} \right)}^2}}.
\end{align}
Without loss of generality, we assume that $\alpha _{1}=0$ and
the evenly circularly distributed BS antennas are labeled such that
$\alpha _{m}=\frac{m-1}{M}2\pi$ for $1\le m\le \lfloor\frac{M}{2}\rfloor$ and $\alpha _{m}=\left(\frac{m-1}{M}-1\right)2\pi$ for $ \lceil\frac{M}{2}\rceil\le m\le M$. Let $\Delta \alpha  \triangleq \frac{{2\pi }}{M}$.



From Eqs. (\ref{large scale}) and (\ref{circle case 1}),
\setlength{\arraycolsep}{1pt}
\begin{eqnarray}\label{circle case 2}
\frac{1}{M}\sum\limits_{m = 1}^M {{\beta _{mk}}}
&=& \frac{1}{M}\sum\limits_{m = 1}^M {\frac{1}{{{{D}}_{m}^v}}} \notag \\
 &=& \frac{1}{M}\sum\limits_{m = 1}^M {{{\left[ {{r^2}{{\sin }^2}{\alpha _{m}} + {{\left( {r\cos {\alpha _{m}} - r_u} \right)}^2}} \right]}^{ -
 \frac{v}{2}}}}\\
&=& \frac{1}{M}\frac{1}{{\Delta \alpha }}\sum\limits_{m  = 1}^{M }
{{{\left[ {{r^2}{{\sin }^2}{\alpha _{m}} + {{\left( {r\cos {\alpha _{m}} - r_u} \right)}^2}} \right]}^{ - \frac{v}{2}}}} \Delta \alpha \label{sum}\\
&\overset{M\rightarrow\infty}{\longrightarrow}&
\frac{1}{2\pi}
\int_{-\pi}^{\pi } \left[ r^2\sin ^2\alpha + \left( r\cos\alpha - r_u \right)^2 \right]^{- \frac{v}{2}}d\alpha.\label{asy-int}
\end{eqnarray}
\setlength{\arraycolsep}{5pt}
Employing \cite[2.5.16.38]{integral_and_series_Vol_1}, we have, from (\ref{asy-int}),
\begin{align}\label{circle case 7}
\frac{1}{M}\sum\limits_{m = 1}^M {{\beta _{mk}}} \overset{M\rightarrow\infty}{\longrightarrow} {\left| {{r^2} - {r_u^2}} \right|^{ -
\frac{v}{2}}}{P_{\frac{v}{2} - 1}}\left( {\frac{{{r^2} +
{r_u^2}}}{{\left| {{r^2} - {r_u^2}} \right|}}} \right).
\end{align}
By using (\ref{circle case 7}) in (\ref{capacity theorem}), Eq.~(\ref{capacity circle theorem}) can be proved. Eq.~(\ref{cl-form-R}) can  be subsequently obtained by using Eq.~\cite[8.912]{rvq:Table_of_Integrals}.

To further illuminate the asymptotic result in (\ref{asy-int}), we investigate the difference between (\ref{asy-int}) and (\ref{sum}). To help the presentation, we use the following notation:
\begin{equation*}
f(\alpha)\triangleq\left[ r^2\sin^2\alpha + \left( r\cos\alpha-r_u \right)^2 \right]^{- \frac{v}{2}}.
\end{equation*}
For $\alpha\in[0,\pi]$, we can obtain via straightforward calculations that
\begin{align}\label{circle case diff 1}
\frac{\partial f}{\partial \alpha} =  -v\left[ r^2\sin^2\alpha +
\left( r\cos \alpha - r_u\right)^2\right]^{-\frac{v}{2}-1}
rr_u\sin\alpha \le 0.
\end{align}
This shows that  $f$ decreases with $\alpha$ when $\alpha\in[0,\pi]$. For the simplicity of presentation, we assume that $M$ is even. The proof for odd $M$ is similar. The difference between (\ref{asy-int}) and (\ref{sum}) can be bounded as follows:
\setlength{\arraycolsep}{1pt}
\begin{eqnarray}
{I_{diff}} &\triangleq&  \frac{1}{M\Delta\alpha}
\sum_{m = 1}^M \left[ r^2\sin^2\alpha_{m} + \left( r\cos\alpha_{m}-r_u \right)^2 \right]^{- \frac{v}{2}} \Delta\alpha - \frac{1}{{2\pi }}\int_{ - \pi }^\pi
{f\left( {\alpha} \right)d\alpha } \\
&=& 2 \frac{1}{M\Delta\alpha}
\sum_{m = 1}^\frac{M}{2} \left[ r^2\sin^2\alpha_{m} + \left( r\cos\alpha_{m}-r_u \right)^2 \right]^{- \frac{v}{2}} \Delta\alpha
- \frac{1}{{2\pi }}\int_{0 }^\pi
{f\left( {\alpha} \right)d\alpha } \label{45} \\
&=& \frac{1}{\pi}  \sum_{m = 1}^{\frac{M}{2}}
f_k\left(\frac{m-1}{M}2\pi\right) \frac{2\pi}{M}
-\sum_{m = 1}^{\frac{M}{2}}\int_{\frac{m-1}{M}}^{\frac{m}{M}}
f\left(\alpha\right)d\alpha \\
&\le& \frac{1}{\pi} \sum_{m = 1}^{\frac{M}{2}}
\left[f\left(\frac{m-1}{M}2\pi\right)-
f\left(\frac{m}{M}2\pi\right)\right]\frac{2\pi}{M}\\
&=&\frac{2}{M}[f(0)-f(\pi)] = \frac{2}{M}[(r-r_u)^{-v}-(r+r_u)^{-v}]
\overset{M\rightarrow\infty}{\longrightarrow}0.
\end{eqnarray}
\setlength{\arraycolsep}{5pt}
In obtaining (\ref{45}), we use the symmetry in $f(\alpha)$. This analysis shows that the difference between (\ref{asy-int}) and (\ref{sum}) is linear in $1/M$. For large but finite number of antennas, (\ref{asy-int}) is a tight approximation of (\ref{sum}).

\section{Proof of Theorem 4}
\label{app-C}
Define $z \triangleq \frac{{{r^2} + r_u^2}}{{\left| {{r^2} - r_u^2} \right|}}$. Notice that $z\ge 1$ always. Using \cite[8.882.1]{rvq:Table_of_Integrals}, we have
\setlength{\arraycolsep}{1pt}
\begin{eqnarray}\label{circle upper 1}
&&P_{\frac{v}{2} - 1}\left( \frac{{{r^2} + r_u^2}}{{\left| {{r^2} - r_u^2} \right|}} \right) ={P_{\frac{v}{2} - 1}}\left( z \right) \nonumber\\
&=& \frac{1}{\pi }\int_0^\pi  {{{\left( {z + \sqrt {{z^2} - 1} \cos \varphi } \right)}^{\frac{v}{2} - 1}}d\varphi } \notag\\
 &=& \frac{{{z^{\frac{v}{2} - 1}}}}{\pi }\int_0^\pi  {{{\left( {1 + \frac{{\sqrt {{z^2} - 1} }}{z}\cos \varphi } \right)}^{\frac{v}{2} - 1}}d\varphi
 }\nonumber \\
&=& \frac{{{z^{\frac{v}{2} - 1}}}}{\pi }\int_0^{\frac{\pi }{2}} {{{\left( {1 + {\frac{{\sqrt {{z^2} - 1} }}{z}}\cos \varphi } \right)}^{\frac{v}{2} - 1}}d\varphi }  + \frac{{{z^{\frac{v}{2} - 1}}}}{\pi }\int_{\frac{\pi }{2}}^\pi  {{{\left( {1 + {\frac{{\sqrt {{z^2} - 1} }}{z}}\cos \varphi } \right)}^{\frac{v}{2} - 1}}d\varphi } \notag \\
 &=& \frac{{{z^{\frac{v}{2} - 1}}}}{\pi }\int_0^{\frac{\pi }{2}} {\underbrace {\left[ {{{\left( {1 + {\frac{{\sqrt {{z^2} - 1} }}{z}}\cos \varphi } \right)}^{\frac{v}{2} - 1}} + {{\left( {1 - {\frac{{\sqrt {{z^2} - 1} }}{z}}\cos \varphi } \right)}^{\frac{v}{2} - 1}}} \right]}_{g\left( {v,z,\varphi } \right)}d\varphi}.
\label{49}
\end{eqnarray}
Next, we look for bounds for $g\left( {v,z,\varphi } \right)$. We derive the derivative of
${g\left( {v,z,\varphi } \right)}$ with respect to $z$ as follows:
\begin{align}\label{circle upper 3}
&\frac{\partial }{{\partial z}}g\left( {v,z,\varphi } \right) \notag\\
&= {\frac{{\frac{v}{2} - 1}}{{{z^2}\sqrt {{z^2} - 1} }} } \left[
{{{\left( {1 + {\frac{{\sqrt {{z^2} - 1} }}{z}}\cos \varphi }
\right)}^{{\frac{v}{2} - 2} }} - {{\left( {1 - {\frac{{\sqrt {{z^2}
- 1} }}{z}}\cos \varphi } \right)}^{{\frac{v}{2} - 2}}}} \right]\cos
\varphi.
\end{align}
Since $z\ge 1$, we have $\sqrt {{z^2} - 1}/z\in[0,1]$. Thus for  $\varphi\in[0,\frac{\pi }{2}]$,
\setlength{\arraycolsep}{5pt}
\[
\left\{\begin{array}{ll}
\frac{\partial }{{\partial z}}g\left( {v,z,\varphi } \right)\le 0 & \mbox{when $v\le 4 $} \\
\frac{\partial }{{\partial z}}g\left( {v,z,\varphi } \right)\ge 0 &
\mbox{when $v\ge 4 $}\end{array}\right..
\]
We can subsequently bound $g\left( {v,z,\varphi } \right)$ as follows:
\begin{equation}
\left\{\begin{array}{ll}
g(v,\infty,\varphi) \le g\left( {v,z,\varphi } \right)\le g(v,1,\varphi) & \mbox{when $2\le v\le 4 $} \\
g(v,1,\varphi) \le g\left( {v,z,\varphi } \right)\le g(v,\infty,\varphi)  & \mbox{when $6\ge v\ge 4 $}\end{array}\right..
 \label{bound-c}
\end{equation}
Define
\setlength{\arraycolsep}{1pt}
\begin{eqnarray}
B_1&\triangleq&\frac{{{z^{\frac{v}{2} - 1}}}}{\pi }\int_0^{\frac{\pi }{2}} g(v,1,\varphi)d\varphi
\nonumber\\
&=&\frac{{{z^{\frac{v}{2} - 1}}}}{\pi }\int_0^{\frac{\pi }{2}} {\left[ {{{\left( {1 + 0 \times \cos \varphi } \right)}^{\frac{v}{2} - 1}} + {{\left( {1 - 0 \times \cos \varphi } \right)}^{\frac{v}{2} - 1}}} \right]d\varphi }\notag = {z^{\frac{v}{2} - 1}}.\\
B_2&\triangleq&\frac{{{z^{\frac{v}{2} - 1}}}}{\pi }\int_0^{\frac{\pi }{2}} g(v,\infty,\varphi)d\varphi
\nonumber\\
&=&\frac{{{z^{\frac{v}{2} - 1}}}}{\pi }\int_0^{\frac{\pi }{2}} {\left[ {{{\left( {1 + \cos \varphi } \right)}^{\frac{v}{2} - 1}} + {{\left( {1 -  \cos \varphi } \right)}^{\frac{v}{2} - 1}}} \right]d\varphi }\nonumber\\
&&=\frac{{{z^{\frac{v}{2} - 1}}}}{\pi }\int_0^\pi  {{{\left( {1 + \cos \varphi } \right)}^{\frac{v}{2} - 1}}d\varphi } = \frac{{{z^{\frac{v}{2} - 1}}}}{\pi }\int_0^\pi  {{2^{\frac{v}{2}
- 1}}{{\cos }^{{{v} - 2}}}\left( {\frac{\varphi }{2}}
\right)d\varphi }\nonumber\\
&&=\frac{{{z^{\frac{v}{2} - 1}}}}{\pi }\int_0^{\frac{\pi }{2}} {{2^{{\frac{v}{2} }}}{{\cos }^{{{v} - 2}}}\left( t \right)dt}
= {\frac{{{2^{3\left(\frac{{v}}{2} - 1\right)}}\Gamma^2\left( {\frac{v}{2} - \frac{1}{2}} \right)}}{{\pi \Gamma \left( {v - 1} \right)}}{z^{\frac{v}{2} - 1}}}.\label{B2}
\end{eqnarray}
\setlength{\arraycolsep}{5pt}The last step is obtain by using
\cite[3.621.1]{rvq:Table_of_Integrals} and
\cite[8.384.1]{rvq:Table_of_Integrals}.

From (\ref{49}) and (\ref{bound-c}), $P_{\frac{v}{2} - 1}\left( z\right)$ can be bounded as:
\begin{equation}
\left\{\begin{array}{ll}
B_1 \ge P_{\frac{v}{2} - 1}\left( z\right)\ge B_2 & \mbox{when $2\le v\le 4 $} \\
B_1 \le P_{\frac{v}{2} - 1}\left( z\right)\le B_2  & \mbox{when $6\ge v\ge 4 $}\end{array}\right..
 \label{bound-d}
\end{equation}
By applying (\ref{bound-d}) in (\ref{capacity circle theorem 1}) and
(\ref{capacity circle theorem}), the first two lines of
(\ref{bound-R}) can be obtained. For the special cases of $\nu=2,4$,
with the aid of Eq. \cite[8.338.2]{rvq:Table_of_Integrals}, $\Gamma
\left( 1/2 \right) = \sqrt \pi  $, $\Gamma \left( {3/2} \right) =
\sqrt \pi/2  $, we have ${\frac{{{2^{\frac{3}{2}v - 3}}\Gamma^2
\left( {\frac{v}{2} - \frac{1}{2}} \right)}}{{\pi \Gamma \left( {v -
1} \right)}}}=1$. The two bounds are equal. Thus the last line of
(\ref{bound-R}) is proved.

%
%
%

\end{document}